
\documentclass[12pt]{article}
\usepackage{amsfonts}
\usepackage{amssymb}
\usepackage{graphicx}
\usepackage{amsmath}
\usepackage{color}
\usepackage{ulem}
\usepackage{subcaption}

\setcounter{MaxMatrixCols}{10}

\textheight=21cm \textwidth=16.5cm \oddsidemargin=.5cm
\topmargin=-1cm
\newtheorem{theorem}{Theorem} [section]

\newtheorem{definition}[theorem]{Definition}
\newtheorem{example}[theorem]{Example}

\newtheorem{lemma}[theorem]{Lemma}

\newtheorem{proposition}[theorem]{Proposition}

\newenvironment{proof}[1][Proof]{\textbf{#1.} }{\ \rule{0.5em}{0.5em}}
\DeclareCaptionLabelSeparator{custom}{}
\captionsetup{labelsep=custom}

\allowdisplaybreaks[1]

\definecolor{armygreen}{rgb}{0.19, 0.53, 0.43}
\definecolor{atomictangerine}{rgb}{1.0, 0.6, 0.4}

\begin{document}

\author{Luis A. Guardiola\thanks{Departamento de Métodos Cuantitativos para la Economía y la Empresa, Universidad de Murcia, Murcia 30100, Spain. e-mail:%
 guardiola@um.es} $^{,}$\thanks{Corresponding author.}, Ana Meca\thanks{I. U. Centro de Investigación Operativa, Universidad Miguel Hernández de Elche, 03202 Elche, Spain. e-mail: ana.meca@umh.es} and Justo Puerto\thanks{IMUS, Universidad de Sevilla, 41012 Sevilla, Spain. e-mail: puerto@us.es}}
\title{Allocating the surplus induced by cooperation in distribution chains
with multiple suppliers and retailers \thanks{
We gratefully acknowledge financial support from the Ministerio de Ciencia,
Innovaci\'on y Universidades (MCIU/AEI/FEDER, UE) through project PGC2018-097965-B-I00, PID2020-114594GB--\{C21,C22\}; P18-FR-1422;  and Fundación BBVA: project NetmeetData (Ayudas Fundación BBVA a equipos de investigación científica 2019);
and by Generalitat Valenciana through project PROMETEO/2021/063 .}}
\maketitle

\begin{abstract}
The coordination of actions and the allocation of profit in supply chains
under decentralized control play an important role in improving the profits
of retailers and suppliers in the chain. We focus on supply chains under
decentralized control in which noncompeting retailers can order from multiple suppliers to replenish their stocks. Suppliers' production capacity is bounded. The goal of the firms in the chain is to maximize their individual profits. As the outcome under decentralized control is inefficient, coordination of actions between cooperating agents can improve individual profits. Cooperative game theory is used to analyze cooperation between agents. We define multi-retailer-supplier games and show that agents can always achieve together an optimal profit and they have incentives to cooperate and to form the grand coalition. Moreover, we show that there always exist stable allocations of the total profit among the firms upon which no coalition can improve. Then we propose and characterize a stable allocation of the total surplus induced by cooperation.

\bigskip \noindent \textbf{Key words:} decentralized distribution chains,
multiple suppliers, cooperation, stable allocations

\noindent \textbf{2000 AMS Subject classification:} 91A12, 90B99
\end{abstract}

\newpage

\section{Introduction}

The coordination of actions and the allocation of profit in supply chains
under decentralized control play a important role in improving the profits
of retailers and suppliers in the chain. The study of the coordination of actions
and the allocation of profits in distribution chains has already been
studied using cooperative game theory.

The literature on profit allocation of supply chain models is rich and focusses both on theoretical and practical issues. Motivated by the content of this paper, we will restrict ourselves to revise the main results that contribute to the issue of profit allocation analyzed from the point of view of cooperative game theory. As earlier as in the seventies Iyer and Bergen (1997) described the issue of the distribution of benefits among partners pointing out that in many cases the outcome of cooperation was that retailer got profit whereas the manufacturers got nothing. Gavirneni et al. (1999) dealt with a production and profit distribution problem of a one-to-many secondary supply chains analyzing what affected the stability and growth of those models.  Hartman et al. (2000) studied the core of inventory centralization games. 
Ball (2001) applied the Shapley value to a cooperative game of portfolio evaluation, analyzing the rationality of various profit allocation methods. Müller et al. (2002) proved that the core of the newsvendor game was not empty. Granot and Yin (2004) analyzed competition and
cooperation aspects on decentralized push and pull assembly
systems. Timmer et al. (2004), based on probabilistic rules, proposed three Shapley-like rules of allocation for these situations. Slikker et al. (2005) extended the analysis of newsvendor games to the case with transhipments. Perea et al. (2009) also analyze profit allocation on supply chains extending Owen type solutions for these models. 
 Nagarajan and Sošić (2009) study the stability and coalition formation in
assembly models. Also  Marcotte et al. (2009) and  Ming et al. (2014) study some alternative models of cooperation for supply chain management. More recently, Moradinasab et al. (2018) revisit 
game theoretic aspects of competition and cooperation between supply chains in multi-objective petroleum green supply chain. Chen et al. (2019) make an application of cooperative game theory to the analysis of technological spillover, power and coordination of firms’ green  behaviour in a supply chain. Finally, Halat et al. (2021) study carbon tax policy in inventory games of multi-echelon supply chains.

The interested reader is also referred to the surveys (Drexl and Kimms, 1997; Leng and Parlar, 2005; Nagarajan and Sošić, 2008; Meca and Timmer, 2008; Elomri, 2015; Deng and Li, 2017), and the references therein for further details and references on the issue of cooperation in supply chain models.

Cooperation and profit allocation in single-period distribution
chains consisting of a supplier and multiple noncompeting retailers were studied in Guardiola et al. (2007). They analyzed cooperation in these chains through their corresponding cooperative TU-games, the RS-games. They showed that there is always a
coalitionally stable profit allocation for RS-games; i.e., the core of RS-games is nonempty. In addition, any of these stable allocations (core-allocations) can naturally be interpreted in terms of its subjacent supply chain. An outstanding core-allocation for RS-games is the minimal gain per capita solution
(mgpc-solution), whose characterization is provided. This
allocation is suitable and compensates the supplier for its role in achieving full cooperation.  
In this paper, we further extend the model of Guardiola et al. (2007) with a
single supplier to the more realistic case of multiple suppliers. We consider distribution chains with a single period and single product. In this supply chain, retailers place this orders at the suppliers one-time. Once the suppliers have completed the production process, they deliver the products to the retailers via a warehouse, which acts as an intermediary at no cost.
Retailers sell the products in their own separate markets without competing with each other. The retailer's expected unit revenue decreases the larger the order quantity. Each retailer seeks to maximize profit by choosing the optimal order quantity.

Retailers purchase the product at a wholesale unit price. This price is given by a function that decreases with increasing quantity ordered. Hence, retailers have incentives to cooperate with each other by combining their orders into one large order, allowing them to enjoy a lower wholesale price per unit. This practice is allowed because warehouses inform suppliers of the size of the order but without identifying the retailers placing the order. In addition, retailers could cooperate with suppliers, this causes intermediate wholesale prices to disappear and leads to a further reduction of cost inefficiencies. Obviously, the total profit of any group of agents is higher than the sum of the individual profits.

In these supply chains cooperation makes sense, so we use cooperative game theory to study them. We define a cooperative TU-game associated with this type of situation where retailers and suppliers are the players and we call it multi-retailer-supplier game (MRS-game). The value of a coalition of retailers and a coalition of suppliers equal to the optimal profit they can
achieve together. We prove that they can always achieve together an optimal
profit (the value of any coalition of retailers and suppliers is positive) and that the profit of the grand coalition of suppliers and retailers
is always greater than the sum of the profits of any partition of the sets
of retailers and suppliers (i.e., MRS-games are superadditive).
Consequently, retailers and suppliers have incentives to cooperate and to form the grand
coalition in MRS-games. We show that the core of
MRS-games is never empty, that is, all
firms in the chain are willing to cooperate because there exist stable
allocations of the total profit among the firms upon which no coalition can
improve.

We introduce a specific allocation of the total surplus induced by full cooperation, the so called Supplier Compensation allocation (SC-allocation). This allocation always belongs to the core of the MRS-game, i.e. it is a stable distribution of profits, and it satisfies several nice properties. In addition, a characterization of the SC-allocation is provided. Finally, we focus on a particular class of MRS-games with unbounded production capacity. That is, suppliers can produce sufficiently large amounts of product so that retailers are not limited in their orders. 
We propose as an allocation for this subclass of MRS-games a modification of the SC-allocation that takes into account the importance of optimal suppliers (i.e., those that
can achieve maximum profit on their own) to achieve full cooperation.

Summarizing, our contributions extend the results in Guardiola et al. (2007) in two important realistic respects:
\begin{itemize}
\item From the case of a single supplier to multiple suppliers.
\item From a general situation without production capacity restrictions to the case with bounded production capacity. We also prove balancedness for the general case and provide a new allocation rule that satisfies insightful properties.
\end{itemize}

The paper is organized as follows. We begin with a preliminary section in
cooperative game theory and other issues. Next, in Section 3, we describe the situation with
a retailer and multiple suppliers. In Section 4, we focus on cooperation in situations with multiple retailers and suppliers. We introduce and analyze the class of multi-retailer-suppplier games (henceforth MRS-games). We prove that the altruistic allocation, which shares all the profit between retailers and suppliers who do not make any profit, is stable in the sense of the core. Then, Section 5 focusses on the study of a new allocation rule for MRS-games that compensates the suppliers by reducing the retailers' profit, that is the SC-allocation. We then characterized it with desirable properties. In Section 6, we analyze a particular class of MRS-games with unbounded production capacity and propose a modification of the SC-allocation that takes into account the importance of optimal suppliers. Finally, Section 7 draws conclusions and points out further research for scholars and practitioners in the field.

\section{Preliminaries}

For the sake of readability, we include in this section most of the basic concepts in cooperative game theory that will be necessary to understand and check the details of the results in the paper. 

A cooperative (profit) TU-game is a pair $(N,v)$ where $%
N=\{1,2,...,n\}$ is a finite set of players. Let $\mathcal{P}(N)$ be the set
of all coalitions $S$ in $N$ and $v:\mathcal{P}(N)\longrightarrow 
\mathbb{R}
$ the characteristic funcion satisfying that $v(\emptyset )=0.$ Note that $v(S)$ represents the maximum profit obtained by coalition $S \subseteq  N$, and $N$ is usually called the grand coalition. A profit vector or allocation is denoted by $x\in 
\mathbb{R}
^{\left\vert N\right\vert }$, where $\left\vert N\right\vert$ is the cardinal of the grand coalition. 

A TU-game $(N,v)$ is said to be monotone increasing  if the larger coalitions obtain greater benefits, i.e. $v(S)\leq v(T)$ for all coalitions $S\subseteq T\subseteq N.$ Additionally, we say that $(N,v)$ is superadditive if any two disjoint coalitions join together, the benefit obtained is at least the sum of their separate benefits. That is, $v(S\cup T) \geq v(S)+v(T),$ for all disjoint coalitions $S,T\subseteq N.$ Note that in superadditive games it makes sense for the grand coalition to form.  Indeed, the benefit obtained by the grand coalition is at least the sum of the benefits of any other coalition and its complementary, i.e. $v(N) \geq v(S)+v(N \setminus S),$ for all $S \subseteq N.$

The core of the game $(N,v)$ is denoted by $Core(N,v)$ and consists
of all those vectors that allocate the benefit of the grand coalition efficiently and are also coalitionally stable, i.e., no player has
an incentive to leave the grand coalition, so that each coalition receives at least its
profit given by the characteristic function: $$Core(N,v)=\left\{ x\in 
\mathbb{R} ^{\left\vert N\right\vert }:x(N)=v(N)\text{ and }\sum_{i\in N}x_{i}\geq
v(S)\ \text{\ for all }S\subset N\right\} .$$ 
A TU-game is balanced if and only if the core is nonempty (Bondareva, 1963; Shapley, 1967). 

A single-valued solution $\varphi $ is an application that assigns to each TU game $(N,v)$ an allocation of $v(N)$, the profit obtained by the gran coalition. Formally, 
$\varphi:G^{N}\longrightarrow \mathbb{R}^{\left\vert N\right\vert }$, where $G^{N}$ is the class of all TU-games with player set $N\ $and $\varphi _{i}(v)$ is the profit assigned to player $i\in N$ in the game $v\in G^{N}.$ Hence,  $\varphi(v)=(\varphi _{i}(v))_{i \in N}$ is a profit vector or allocation of $v(N)$. For a complete description of cooperative game theory, we refer the reader to  Gonz\'alez-D\'{\i}az et al. (2010).

Finally, for any $a,b\in \mathbb{R}$ with $a< b,$ we denote by $[a,b]$ and $(a,b)$ a closed interval and an open interval in $\mathbb{R}$, respectively. Let $f:\mathbb{R}\longrightarrow \mathbb{R}$ a real function. It is said to be non increasing on  an real interval $I\subset \mathbb{R}$ if for any $x,y \in I$ such that $x <y$, we have that $f(x) \geq f(y)$. Finally, $\mathbb{R}_{+}^{\left\vert N\right\vert }$ is the set containing the $\left\vert
N\right\vert $-dimensional vectors with non-negative components.

\section{Problems with a retailer and multiple suppliers}

In this section we focus on a single-product supply chain problem with a single period, where a single retailer demand's is replenished by several suppliers via a warehouse. Denote by $M=\{1,...,m\}$ the set of suppliers.

In order to analyze these situations we assume the following natural
assumptions on the essential elements of the model.

\begin{itemize}
\item Production cost function. For any retailer $j\in M$, $c_{j}:[0,+\infty
)\rightarrow (0,+\infty )$\ is a decreasing and continuous function where $%
c_{j}(q)q$ is non decreasing. It is the unit production costs of the
supplier $j\in M$.

\item Wholesale price function. For any retailer $j\in M$, the wholesale function $%
w_{j}:[0,+\infty )\rightarrow (0,+\infty )$\ satisfies that $%
w_{j}(q)>c_{j}(q)$\ for all $q\geq 0$ and it is a non increasing and
continuous function. Moreover, $w_{j}(q)q$ is non decreasing. %
This function models the unitary wholesale price of the supplier.

\item Retailer price function. $p:[0,+\infty )\rightarrow \mathbb{R}$\ is
non increasing and continuous function on $q$ and satisfies $p(0)>w_{j}(0)$\
for all $j\in M$. Moreover, for each supplier $j\in M$ there exists a
positive quantity $q^{\ast }$\ such that $p(q^{\ast })=0$. This function
models the unitary selling price of the retailer.

\item $\overline{q}_{j}\in (0,+\infty )$ is the maximun order size that the
supplier $j$ could satisfy. 
\end{itemize}

A retailer-multi-supplier problem with bounded production is denoted by the tuple $(M,W,C,p,\overline{q})$ where $C=(c_{1},...,c_{m})$, $W=(w_{1},...,w_{m})$, $p$ is the retailer price fucntion and $\overline{q}=\left( \overline{q}_{j}\right)_{j\in M}.$ Given such a problem, let $\mathbb{Q}=\{q\in \mathbb{R}_{+}^{M}\mid q_{M}\leq q^{\ast }$ and $q_{j}\leq \overline{q}_{j}$ for all $j\in M\}$\ be the set of feasible order sizes, that is, those for which the profit unitary selling price is not negative. Notice that $q=\left({q}_{j}\right)_{j\in M},$ with  $q_{j}$ is the retailer order size to supplier $j$ and  $q_{M}:=\sum_{j\in M}q_{j}$. Moreover $\mathbb{Q\neq \emptyset }$ because $q^{\ast }$ is a positive
quantity.

The profit function of the retailer is given by:
\begin{equation*}
\Pi \left( q,W\left( q\right) \right) :=p(q_{M})q_{M}-\sum_{j\in
M}w_{j}(q_{j})q_{j}
\end{equation*}

The order quantity $q$ is determined by the retailer maximizing his profit:
\begin{eqnarray*}
\max &&\Pi \left( q,W\left( q\right) \right) \\
\mbox{s.t.:} &&q\in \mathbb{Q}.
\end{eqnarray*}

Note that $\mathbb{Q}$ is a compact set and $\Pi \left( q,W\left( q\right)
\right) $ is continuous on $\mathbb{Q}$. Hence, there exits always an optimal solution to the above problem. 

 We consider $\mathbb{Q}^{\ast }$ $%
=\{q\in \mathbb{R}_{+}^{M}\mid p(q_{M})\geq \underset{j\in M:q_{j}\neq 0}{%
\max }w_{j}(q_{j})$ and $q_{j}\leq \overline{q}_{j}$ for all $j\in M\}$ to
be the set of feasible order sizes that do not result in losses for the retailer. We know that  
$\mathbb{Q}^{\ast }\mathbb{%
\neq \emptyset }$ because $p(0)>w_{j}(0)$ for all $j\in M$. Moreover, the
reader may notice that any optimal solution $\widehat{q}$ satisfies that $p(%
\widehat{q}_{j})\geq w_{j}(\widehat{q}_{j})$ for all $j\in M$ since
otherwise the retailer incurs in losses$.$ Therefore, the optimal solution
for the above optimization problems is attained in $\mathbb{Q\cap Q}^{\ast }.$
 
We complete the description of a retailer-multi-supplier problem by stating the expression for the profit of supplier $j\in M$. Given the retailer's order size $q\in \mathbb{R}_{+}^{M}$, the profit for
each supplier $j$ is given by:%
\begin{equation*}
\Upsilon _{j}\left( q\right) =\left( w_{j}(q_{j})-c_{j}(q_{j})\right) q_{j}.
\end{equation*}

\section{Cooperation in problems with multiple retailers and suppliers}

We extend now the retailer-multi-suplier problem to the case of multiple retailers  whose demand is replenished by several suppliers. Let  $N\cup M\ $ be the set of agents in the distribution chain, where $N=\{1,...,n\}$ be the set of retailers and $M=\{1,...,m\} $ be the set of suppliers. For the ease of readability, in the following we will use the following
notation.
\smallskip
\begin{itemize}
\item $q_{ij}$ order size of retailer $i\in N$ to supplier $j\in
M$.

\item $q_{Rj}:=\sum_{i\in R}q_{ij},$ total order size by a
coalition $R\subseteq N$ of retailers to supplier $j\in M.$

\item $q_{i}:=\left( q_{i1},...,q_{im}\right) \in \mathbb{R}_{+}^{M}$ and $%
q_{R}:=\left( q_{R1},...,q_{Rm}\right) \in \mathbb{R}_{+}^{M}$ for all $%
R\subseteq N.$

\item $q_{RM}:=\sum_{j\in M}\sum_{i\in R}q_{ij}$.
\end{itemize}

In addition, in this model we assume that each supplier $j$ signs a binding
contract with the retailer $i$ guaranteeing up to a maximum order quantity, $%
\overline{q}_{ij}$, of its production. We collect all of then in $\overline{Q%
}=\left( \overline{q}_{ij}\right) .$ Hence, we can denote as:

\begin{itemize}
\item $\overline{q}_{j}\in (0,+\infty )$ the maximum production capacity for
supplier $j$. That is $\overline{q}_{j}=\sum_{i\in N}\overline{q}_{ij}.$

\item $\overline{q}_{RS}:=\sum_{j\in S}\sum_{i\in R}\overline{q}_{ij}$ the
the maximum order that retailers' coalition $R$ can place to the suppliers'
coalition\ $S.$

\item $\Upsilon _{j}\left( q_{i},q_{R}\right) =\left(
w_{j}(q_{Rj})-c_{j}(q_{Rj})\right) \cdot q_{ij}$ the profit that supplier $%
j\in M$ makes with player $i\in R$ jointly ordering within the coalition $R$%
. Note that $\sum_{i\in R}\Upsilon _{j}\left( q_{i},q_{R}\right) =\Upsilon
_{j}\left( q_{R},q_{R}\right) $ is the profit obtained by supplier $j$ from
the coalition of retailers $R$.
\end{itemize}

A multi-retailer-supplier situation (henceforth MRS-situation) is a tuple $(N,M,W,C,P,\overline{Q})$ where $P=(p_{1},...,p_{n})$, $p_{i}$ is the price function of retailer $i$. By the monotonicity of the price functions, there exists a
value $q_{i}^{\ast }>0$ satisfying that $p_{i}(q_{i}^{\ast })=0,$ for  each $i\in N$.

In such a MRS-situation the cooperation among the agents may
occur in two ways: excluding or including suppliers. First, retailers can place joint orders for the good, getting a discount from the unit price provided by the supplier. This ensures that cooperation between retailers is profitable. Then, the joint profit of a
coalition $R$ equals:%
\begin{eqnarray*}
\max &&\sum_{i\in R}\Pi _{i}\left( q_{i},W\left( q_{R}\right) \right) :=
\left( p_{i}(q_{iM})q_{iM}-\sum_{j\in M}w_{j}(q_{Rj})q_{ij}\right) \\
\mbox{s.t.:} &&q\in \mathbb{Q}^{R},
\end{eqnarray*}%
where 
\begin{equation*}
\mathbb{Q}^{R}:=\left\{ q\in \mathbb{R}_{+}^{\left\vert
R\right\vert \times M}\left\vert 
\begin{array}{c}
q_{iM}\leq q_{i}^{\ast }\text{ for all }i\in R \\ 
q_{ij}\leq \overline{q}_{ij}\text{ for all }j\in M%
\end{array}%
\right. \text{ }\right\} .
\end{equation*}%

$\mathbb{Q}^{R}$ is compact and the objective function $\sum_{i\in R}\Pi
_{i}\left( q_{i},W\left( q_{R}\right) \right) $ is continuous on $\mathbb{Q}%
^{R}.$ We define the feasible set $\mathbb{Q}_{\ast }^{R}\ $ as $\left\{ q\in 
\mathbb{R}_{+}^{\left\vert
R\right\vert \times M}\mid p_{i}\left( q_{iM}\right) \geq \underset{j\in
M:q_{ij}\neq 0}{\max }w_{j}(q_{Rj})\text{ for all }i\in R,\text{ with }\right.$ 
$\left.  q_{ij}\leq \overline{q}_{ij}\text{ for all }j\in M\right\} $. The reader may
notice that all optimal solutions $\widehat{q}\in \mathbb{Q}^{R}\cap \mathbb{Q}%
_{\ast }^{R}.$

A second type of cooperation is among a group of retailers $R$ and suppliers 
$S$. We assume that suppliers share their production technologies, such that
the production function is the minimum of all production functions among
those suppliers belonging to the coalition, i.e, $c_{S}:=\underset{j\in S}{%
\min }\left\{ c_{j}\right\} .$ This can be interpreted as the cooperating suppliers producing their orders jointly by adding their productive capacity. The joint profit of this retailer-supplier
group $\left( R,S\right) $ is:

\begin{gather}  \label{lol}
\sum_{i\in R}\Pi _{i}\left( q_{i},\Psi ^{S}\left( q_{R}\right) \right) 
\notag:=\sum_{i\in R}\left( p_{i}(q_{iM})q_{iM}-\sum_{j\in
M}w_{j}(q_{Rj})q_{ij}+\sum_{j\in S}\Upsilon _{j}^{S}\left(
q_{i},q_{R}\right) \right) \\
=\sum_{i\in R}\left( p_{i}(q_{iM})q_{iM}-\sum_{j\in
S}c_{S}(q_{RS})q_{ij}-\sum_{j\in M\setminus S}w_{j}(q_{Rj})q_{ij}\right) ,
\notag \\
\end{gather}

where%
\begin{eqnarray*}
\Upsilon _{j}^{S}\left( q_{i},q_{R}\right) &:&=\left(
w_{j}(q_{Rj})-c_{S}(q_{RS})\right) q_{ij}; \\
\Psi ^{S}\left( q_{R}\right) &:&=\left[ \left( c_{S}(q_{RS})\right) _{j\in
S},\left( w_{j}\left( q_{Rj}\right) \right) _{j\in M\setminus S}\right] ; \\
\Pi _{i}\left( q_{i},\Psi ^{S}\left( q_{R}\right) \right)
&:&=p_{i}(q_{iM})q_{iM}-\sum_{j\in S}c_{S}(q_{RS})q_{ij}-\sum_{j\in
M\setminus S}w_{j}(q_{Rj})q_{ij}.
\end{eqnarray*}

Thus, the retailers-suppliers group $\left( R,S\right) $ determines their optimal order quantity $q$ as the one that maximizes its profit:

\begin{eqnarray}
\max &&\sum_{i\in R}\Pi _{i}\left( q_{i},\Psi ^{S}\left( q_{R}\right) \right)
\label{P:RS-1} \\
\mbox{s.t.:} && q\in \mathbb{Q}^{R}  \label{P:RS-2}
\end{eqnarray}%

Similar arguments as those used above ensure the existence of an optimal
solution for this optimization problem. Let $q_{i}^{(R,S)}\in \mathbb{R}%
_{+}^{M}$ denote the optimal order size for retailer $i\in R$ when
cooperating with the coalition $S$ of suppliers.

Our next lemma shows the main properties of the retailers' profit function
and the monotonicity property of the order size with respect to the sizes of
the coalitions of suppliers.

\begin{lemma}
\label{Lemma 1} Let $(N,M,W,C,P,\overline{Q})$ be a MRS-situation, and $i\in
R\subseteq N$. Then, for all $S,T\subseteq M$ and such that $T\subseteq S:$

\begin{description}
\item[(P1)] 
\begin{equation*}
\Pi _{i}\left( q_{i}^{(R,S)},\Psi ^{S}\left( q_{R}^{(R,S)}\right) \right)
=\Pi _{i}\left( q_{i}^{(R,S)},\Psi ^{\emptyset }\left( q_{R}^{(R,S)}\right)
\right) +\sum_{j\in S}\Upsilon _{j}^{S}\left(
q_{i}^{(R,S)},q_{R}^{(R,S)}\right) ;
\end{equation*}

\item[(P2)] for all $q\in \mathbb{Q}^{R}:$
\begin{equation*}
\Pi _{i}\left( q_{i},\Psi ^{S}\left( q_{R}\right) \right) \geq \Pi
_{i}\left( q_{i},\Psi ^{T}\left( q_{R}\right) \right);
\end{equation*}

\item[(P3)]
\begin{equation*}
q_{\left(R\setminus \left\{i\right\} \right)M}^{(N,M)}\geq q_{\left(R\setminus \left\{i \right\}\right)M}^{(R,S)};
\end{equation*}

\item[(P4)] $\ $%
\begin{equation*}
\Pi _{i}\left( q_{i}^{(R,S)},\Psi ^{S}\left( q_{R}^{(R,S)}\right) \right)
\geq \Pi _{i}\left( q_{i}^{(R,S)},\Psi ^{\emptyset }\left(
q_{R}^{(R,S)}\right) \right) ;
\end{equation*}

\item[(P5)] 
\begin{equation*}
\sum_{i\in R}\Pi _{i}\left( q_{i}^{(R,S)},\Psi ^{S}\left(
q_{R}^{(R,S)}\right) \right) \geq \sum_{i\in R}\Pi _{i}\left(
q_{i}^{(R,T)},\Psi ^{T}\left( q_{R}^{(R,T)}\right) \right).
\end{equation*}
\end{description}
\end{lemma}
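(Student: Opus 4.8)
The plan is to treat the five parts in the order (P1), (P2), (P4), (P5), and finally (P3), since each of the first four follows from expanding the definitions together with the monotonicity of the shared cost function $c_{S}=\min_{j\in S}c_{j}$, whereas (P3) is a genuine comparative-statics statement and is where the real work lies.

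For (P1) I would substitute the definitions of $\Psi^{S}$, $\Psi^{\emptyset}$ and $\Upsilon_{j}^{S}$ into $\Pi_{i}$ and rearrange: writing $\Pi_{i}(q_{i},\Psi^{\emptyset}(q_{R}))=p_{i}(q_{iM})q_{iM}-\sum_{j\in M}w_{j}(q_{Rj})q_{ij}$ and adding $\sum_{j\in S}(w_{j}(q_{Rj})-c_{S}(q_{RS}))q_{ij}$, the wholesale terms over $S$ cancel and one is left exactly with $\Pi_{i}(q_{i},\Psi^{S}(q_{R}))$. I would stress that this identity holds for \emph{every} feasible $q$, not only at the optimum, since this general version is what (P2) and (P5) need. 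For (P2) I would apply that identity at an arbitrary $q\in\mathbb{Q}^{R}$ to both $S$ and $T$, so that $\Pi_{i}(q_{i},\Psi^{S}(q_{R}))-\Pi_{i}(q_{i},\Psi^{T}(q_{R}))=\sum_{j\in S}\Upsilon_{j}^{S}(q_{i},q_{R})-\sum_{j\in T}\Upsilon_{j}^{T}(q_{i},q_{R})$. Splitting $S=T\cup(S\setminus T)$, this rewrites as $\sum_{j\in T}(c_{T}(q_{RT})-c_{S}(q_{RS}))q_{ij}+\sum_{j\in S\setminus T}(w_{j}(q_{Rj})-c_{S}(q_{RS}))q_{ij}$, and I would verify both groups of terms are non-negative using three facts: $c_{S}$ is non-increasing (a minimum of non-increasing functions); $q_{RS}\geq q_{RT}\geq q_{Rj}$; and $c_{S}=\min_{k\in S}c_{k}\leq c_{T}$ as well as $c_{S}\leq c_{j}<w_{j}$ for $j\in S$. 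Concretely $c_{S}(q_{RS})\leq c_{S}(q_{RT})\leq c_{T}(q_{RT})$ settles the first group and $c_{S}(q_{RS})\leq c_{S}(q_{Rj})\leq c_{j}(q_{Rj})<w_{j}(q_{Rj})$ the second.

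Part (P4) is then immediate: it is (P2) with $T=\emptyset$ evaluated at the optimizer $q^{(R,S)}$, equivalently the observation via (P1) that $\sum_{j\in S}\Upsilon_{j}^{S}(q_{i}^{(R,S)},q_{R}^{(R,S)})\geq 0$. For (P5) I would combine feasibility and optimality: $q^{(R,T)}$ lies in $\mathbb{Q}^{R}$, which is exactly the feasible set of the $(R,S)$-problem as well, so summing (P2) over $i\in R$ at $q=q^{(R,T)}$ and then invoking optimality of $q^{(R,S)}$ for the $(R,S)$-objective chains into the claimed inequality.

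The hard part is (P3). Here I would first record the aggregation identities obtained by summing over $i\in R$: the $(R,S)$-objective collapses to $\sum_{i\in R}p_{i}(q_{iM})q_{iM}-c_{S}(q_{RS})q_{RS}-\sum_{j\in M\setminus S}w_{j}(q_{Rj})q_{Rj}$ and the grand-coalition objective to $\sum_{i\in N}p_{i}(q_{iM})q_{iM}-c_{M}(q_{NM})q_{NM}$, so that only the scalar totals $q_{iM}$ and the grand total enter the cost. The statement then asserts that lowering the effective unit cost---passing from the mixed $c_{S}/w_{j}$ schedule on $(R,S)$ to the uniformly cheaper $c_{M}$ on $(N,M)$, while also enlarging the retailer set---weakly raises the aggregate order of the retailers in $R\setminus\{i\}$. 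I expect this monotone-comparative-statics step to be the main obstacle, precisely because the revenue $p_{i}(q_{iM})q_{iM}$ is only assumed to vanish at the endpoints of $[0,q_{i}^{\ast}]$ and is \emph{not} concave, so a naive exchange argument (replacing the grand-coalition orders of $R\setminus\{i\}$ by their larger $(R,S)$ values) need not improve the objective even though it is feasible, the per-pair capacities being identical. I would therefore argue through a single-crossing / Topkis-type comparison, exploiting that $c(q)q$ and $w(q)q$ are non-decreasing and that $c_{M}\leq c_{S}\leq w_{j}$, to show the optimal-order correspondence is monotone in the cost schedule and in the retailer set; the excluded retailer $i$ is carried along unchanged so that the two optimal profiles remain directly comparable. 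This is the step on which I would spend most of the effort to make the inequality rigorous.
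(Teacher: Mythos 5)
Your handling of (P1), (P2), (P4) and (P5) is correct and is essentially the paper's own argument. (P1) is the algebraic cancellation of the wholesale terms over $S$; your difference decomposition for (P2), namely
$\sum_{j\in T}\left(c_{T}(q_{RT})-c_{S}(q_{RS})\right)q_{ij}+\sum_{j\in S\setminus T}\left(w_{j}(q_{Rj})-c_{S}(q_{RS})\right)q_{ij}\geq 0$,
rests on exactly the chain $c_{S}(q_{RS})\leq c_{S}(q_{RT})\leq c_{T}(q_{RT})$ (plus $c_{S}\leq c_{j}<w_{j}$) that the paper invokes, only organized as a difference of two applications of (P1) instead of a chain of inequalities on the expanded profit; (P4) as (P2) with $T=\emptyset$ at the optimum, and (P5) as (P2) summed over $i\in R$ at $q^{(R,T)}$ followed by optimality of $q^{(R,S)}$ over the common feasible set $\mathbb{Q}^{R}$, are verbatim the paper's steps.

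The genuine gap is (P3): you never prove it. You correctly flag it as the only part with real mathematical content, and your reasons for distrusting the naive exchange argument are sound (the revenues $p_{i}(q_{iM})q_{iM}$ need not be concave, and the cost coupling through $c_{S}(q_{RS})$ ties the retailers together), but the announced Topkis/single-crossing route is left entirely unexecuted: no supermodularity or increasing-differences property of the objective is verified, and none is obvious for this objective. Worse, since optimal solutions need not be unique, the inequality as stated is selection-dependent: with ties, an adversarial choice of $q^{(N,M)}$ and $q^{(R,S)}$ can break it, so any rigorous argument must either fix a selection rule or run a revealed-preference aggregation argument (the two-inequality trick comparing the two optima under the two cost schedules) at the level of the scalar totals you isolate. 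For what it is worth, the paper does no better: its entire proof of (P3) is the sentence that costs are cheaper in the $(N,M)$ problem than in the $(R,S)$ problem, ``therefore the total order of coalition $R\setminus\{i\}$ cannot be smaller'' --- precisely the informal claim you decline to accept. So your proposal is incomplete at (P3), but honestly so, whereas the paper asserts that step without proof; completing it would require the comparative-statics work you outline, and that work is genuinely nontrivial here.
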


\begin{proof}
(P1) follows inmediatately from (\ref{lol}).

(P2) 
\begin{eqnarray*}
\Pi _{i}\left( q_{i},\Psi ^{S}\left( q_{R}\right) \right)
&=&p_{i}(q_{iM})q_{iM}-\sum_{j\in S}c_{S}(q_{RS})q_{ij}-\sum_{j\in
M\setminus S}w_{j}(q_{Rj})q_{ij} \\
&=&p_{i}(q_{iM})q_{iM}-\sum_{j\in T}c_{S}(q_{RS})q_{ij}-\sum_{j\in
S\backslash T}c_{S}(q_{RS})q_{ij}-\sum_{j\in M\setminus S}w_{j}(q_{Rj})q_{ij}
\\
&\geq &p_{i}(q_{iM})q_{iM}-\sum_{j\in T}c_{S}(q_{RS})q_{ij}-\sum_{j\in
S\backslash T}w_{j}(q_{Rj})q_{ij}-\sum_{j\in M\setminus S}w_{j}(q_{Rj})q_{ij}
\\
&=&p_{i}(q_{iM})q_{iM}-\sum_{j\in T}c_{S}(q_{RS})q_{ij}-\sum_{j\in
M\setminus T}w_{j}(q_{Rj})q_{ij} \\
&\geq &p_{i}(q_{iM})q_{iM}-c_{T}(q_{RT})q_{iT}-\sum_{j\in M\setminus
T}w_{j}(q_{Rj})q_{ij} \\
&=&\Pi _{i}\left( q_{i},\Psi ^{T}\left( q_{R}\right) \right)
\end{eqnarray*}

since $c_{S}(q_{RS})\leq c_{S}(q_{RT})\leq c_{T}(q_{RT})\ $and $q_{RS}\geq
q_{RT}.$

(P3) $q_{\left(R\setminus \left\{i\right\} \right)M}^{(N,M)}\geq q_{\left(R\setminus \left\{i \right\}\right)M}^{(R,S)}$ since $R\subseteq N$ and $S\subseteq M$, costs are
cheaper in the problem involving $\left(N,M\right)$ than in the one involving $\left(R,S\right)$.
Therefore, the total order of coalition $R\setminus \left\{i\right\}$ cannot be smaller.

(P4) follows from (P1) and positivity of functions $\Pi _{i}$ and $\Upsilon
_{j}^{S}$ for all $i\in R\subseteq N$ and $j\in S\subseteq M.$

Finally, (P5):%
\begin{equation*}
\sum_{i\in R}\Pi _{i}\left( q_{i}^{(R,T)},\Psi ^{T}\left(
q_{R}^{(R,T)}\right) \right) \leq \sum_{i\in R}\Pi _{i}\left(
q_{i}^{(R,T)},\Psi ^{S}\left( q_{R}^{(R,T)}\right) \right) \leq \sum_{i\in
R}\Pi _{i}\left( q_{i}^{(R,S)},\Psi ^{S}\left( q_{R}^{(R,S)}\right) \right)
\end{equation*}

Note that the first inequality follows from (P2) and the last one is true
since in general $q^{(R,T)}\in \mathbb{Q}^R$ and $q^{(R,S)}\in \mathbb{Q}^R$ is the
optimal solution.   \hfill
\end{proof}
\medskip
Next we define the cooperative game corresponding to a MRS-situation.

\begin{definition}
Let $(N,M,W,C,P,\overline{Q})$ be a MRS-situation. The corresponding
MRS-game $(N,M,v)$ with characteristic function $v:N\times M\rightarrow 
\mathbb{R}$ is given by 
\begin{equation*}
v(R,S)=\sum_{i\in R}\Pi _{i}\left( q_{i}^{(R,S)},\Psi ^{S}\left(
q_{R}^{(R,S)}\right) \right)
\end{equation*}%
for all coalitions $R\subseteq N$, $S\subseteq M$ and $v(\emptyset ,S)=0$
for all $S\subseteq M.$
\end{definition}

The definition of MRS-situations and their corresponding games focuses on
retailer revenues arising from selling goods to consumers. The following
result shows some characteristics of MRS-games.

\begin{proposition}
\label{lemma}Let $(N,M,W,C,P,\overline{Q})$ be a MRS-situation and $(N,M,v)$
its corresponding MRS-game. Then

\begin{description}
\item[(i)] $v(R,S)>0$ for all coalitions $\emptyset \neq R\subseteq N$ and 
$S\subseteq N$.
\item[(ii)] $v(R,S)+v(F,T) \le v(R\cup F, S\cup T)$ for any $R,F\subseteq N$ and $S,T\subseteq M$ such that $R\cap F = \emptyset$ and  $S\cap T = \emptyset$; i.e.,  $v$ is superadditive.
\item[(iii)] $v(R,S)\le v(F,T)$ for any $R\subseteq F$ and $S\subseteq   T$.
\end{description}
\end{proposition}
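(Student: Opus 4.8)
The plan is to prove the three parts of Proposition~\ref{lemma} in the order (i), (iii), (ii), since balancedness-style positivity and monotonicity are the natural building blocks, and superadditivity will follow most easily once monotonicity is in hand. Throughout, I would work directly from the definition $v(R,S)=\sum_{i\in R}\Pi_i\left(q_i^{(R,S)},\Psi^S\left(q_R^{(R,S)}\right)\right)$ and lean heavily on the structural facts already established in Lemma~\ref{Lemma 1}, especially properties (P4) and (P5).

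For part (i), the goal is strict positivity of $v(R,S)$ whenever $R\neq\emptyset$. The key observation is that the retailers always have the option of placing a small but strictly positive order. Since $p_i(0)>w_j(0)$ for every supplier $j$, and both $p_i$ and $w_j$ are continuous, there is a neighborhood of $0$ on which the per-unit margin $p_i(q_{iM})-w_j(q_{Rj})$ stays strictly positive; choosing any feasible order with all $q_{ij}$ small and positive (feasibility is guaranteed because $q_i^\ast>0$ and $\overline q_{ij}>0$) yields $\Pi_i\left(q_i,\Psi^\emptyset(q_R)\right)>0$. The optimal value $v(R,S)$ is at least this strictly positive quantity by optimality, and using (P4) the inclusion of suppliers $S$ only increases the objective. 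Hence $v(R,S)>0$.

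For part (iii), monotonicity $v(R,S)\le v(F,T)$ for $R\subseteq F$ and $S\subseteq T$, I would decompose the comparison into two steps: first fixing the retailer set and enlarging the supplier set, then enlarging the retailer set. Enlarging the supplier coalition from $S$ to $T$ with $R$ fixed is precisely the content of (P5), which gives $\sum_{i\in R}\Pi_i\left(q_i^{(R,S)},\Psi^S(\cdot)\right)\le\sum_{i\in R}\Pi_i\left(q_i^{(R,T)},\Psi^T(\cdot)\right)$, i.e.\ $v(R,S)\le v(R,T)$. For enlarging the retailer set from $R$ to $F$, I would note that the added retailers each contribute a strictly positive (hence nonnegative) optimal profit, so that an optimal plan for $(F,T)$ restricted to the retailers in $R$ is feasible for $(R,T)$ and the full $(F,T)$ value dominates; combining the two inequalities gives the claim.

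For part (ii), superadditivity, the natural strategy is to exhibit a feasible order for the merged group $(R\cup F,\,S\cup T)$ that recovers at least $v(R,S)+v(F,T)$. Because $R\cap F=\emptyset$ and $S\cap T=\emptyset$, I would take the concatenation of the optimal orders $q^{(R,S)}$ and $q^{(F,T)}$ as a candidate feasible plan for the joint problem. The subtle point—and I expect this to be the main obstacle—is that the merged suppliers now pool their technology via $c_{S\cup T}=\min_{j\in S\cup T}\{c_j\}$, so the characteristic function of the joint coalition is evaluated with a (weakly) cheaper cost function than either piece uses in isolation; one must check that this pooling does not decrease any term and that the cross-effects through the shared wholesale prices $w_j(q_{Rj})$ do not create interference between the two retailer blocks. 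The cleanest way around this is to invoke monotonicity from part (iii) together with (P2)/(P5): the value $v(R\cup F,S\cup T)$ dominates $v(R\cup F,S)$, which in turn dominates the sum $v(R,S)+v(F,\emptyset)$ once one verifies that separating the disjoint retailer blocks cannot increase the joint optimum below the sum of the separate optima. I would make this precise by showing the concatenated order is feasible and that the objective evaluated at it splits additively into the two blocks (since the retailer price terms $p_i$ are separable and each supplier's wholesale term only couples retailers ordering from that supplier), thereby bounding $v(R\cup F,S\cup T)$ from below by $v(R,S)+v(F,T)$.
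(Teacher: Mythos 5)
Your part (i) is fine and matches the paper's (terser) justification, but both later parts have real gaps. In part (iii), your step for enlarging the retailer set is backwards. Restricting an optimal plan for $(F,T)$ to the retailers in $R$ does give a feasible point for the $(R,T)$ problem, but optimality then yields $v(R,T)\geq$ (value of the restricted plan), which is the wrong direction; and the claim that the retailers in $R$ collectively earn at least $v(R,T)$ at the $(F,T)$ optimum is not obvious at all, since the joint optimizer may sacrifice the $R$-block's profit (e.g.\ have them over-order to push $w_j$ down for the benefit of $F\setminus R$) --- indeed the paper needs a separate multi-step argument, Lemma~\ref{Lemma2}, even for the individual version of such a domination statement. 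The elementary fix is the reverse construction: extend an optimal plan for $(R,T)$ by zero orders for every $i\in F\setminus R$. This is feasible for $(F,T)$, leaves every aggregate quantity --- and hence every price --- unchanged, and the added retailers contribute exactly zero, so $v(F,T)\geq v(R,T)$. (The paper avoids proving (iii) directly altogether: it proves (ii) first and then gets (iii) from (i), (ii) and $v(\emptyset,S)=0$.)

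In part (ii), the route you call cleanest does not prove the statement: the chain $v(R\cup F,S\cup T)\geq v(R\cup F,S)\geq v(R,S)+v(F,\emptyset)$ ends with $v(F,\emptyset)$, and since $v(F,\emptyset)\leq v(F,T)$ this only yields a weaker inequality than superadditivity; moreover the middle step is itself a superadditivity claim, so the reduction is circular. Your fallback --- evaluating the concatenated order $\widehat q$ in the joint problem --- is exactly the paper's proof, but your justification (``the objective splits additively into the two blocks'') misses the crux. The objective does split as a sum over $R$ and $F$, but each block is evaluated at the pooled prices $\Psi^{S\cup T}\left(\widehat q_{R\cup F}\right)$, not at its standalone prices, so additivity alone gives nothing. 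What closes the argument is monotonicity: $w_j$ and the $c_j$'s are non-increasing, the pooled aggregate quantities are no smaller than either block's standalone aggregates, and $c_{S\cup T}\leq c_{S}$ pointwise, hence each block's profit at $\widehat q$ weakly exceeds its standalone optimal value (this is the content of the paper's first two inequalities, via (P2) of Lemma~\ref{Lemma 1}), after which optimality of $q^{(R\cup F,S\cup T)}$ finishes. You flag this ``interference'' issue yourself, but then resolve it with the wrong mechanism.
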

\begin{proof}
(i) Follows inmediatately from the definition of the game and the positive
profit margins for the retailers. (ii) Let $R,F\subseteq N$ and $%
S,T\subseteq M$ be disjoint coalitions of retailer and suppliers
respectively. By definition of the MRS-game%
\begin{eqnarray*}
v(R,S)+v(F,T) &=&\sum_{i\in R}\Pi _{i}\left( q_{i}^{(R,S)},\Psi ^{S}\left(
q_{R}^{(R,S)}\right) \right) \\
&&+\sum_{i\in F}\Pi _{i}\left( q_{i}^{(F,T)},\Psi ^{T}\left(
q_{F}^{(F,T)}\right) \right).
\end{eqnarray*}%
Define $\widehat{q}_{ij}=q_{ij}^{(R,S)}$ if $i\in R$, $j\in S$ and $\widehat{%
q}_{ij}=q_{ij}^{(F,T)}$ if $i\in F$, $j\in T.$ Consider $H=R\cup F.$ Now
applying (P2) of Lemma \ref{Lemma 1} 
\begin{eqnarray*}
v(R,S)+v(F,T) &\leq &\sum_{i\in R}\Pi _{i}\left( \widehat{q}_{i},\Psi
^{S}\left( \widehat{q}_{H}\right) \right) +\sum_{i\in F}\Pi _{i}\left( 
\widehat{q}_{i},\Psi ^{T}\left( \widehat{q}_{H}\right) \right) \\
&\leq &\sum_{i\in R}\Pi _{i}\left( \widehat{q}_{i},\Psi ^{S\cup T}\left( 
\widehat{q}_{H}\right) \right) +\sum_{i\in F}\Pi _{i}\left( \widehat{q}%
_{i},\Psi ^{S\cup T}\left( \widehat{q}_{H}\right) \right) \\
&\leq &\sum_{i\in H}\Pi _{i}\left( q_{i}^{(H,S\cup T)},\Psi ^{S\cup T}\left(
q_{H}^{(H,S\cup T)}\right) \right) =v(R\cup F,S\cup T)
\end{eqnarray*}%
since $\widehat{q}_{i}$ is not necessary optimal for $(H,S\cup T).$ Finally,
(iii) follows from (i) and (ii).  \hfill
\end{proof}

\bigskip Property (i) shows that there is always a benefit from cooperation between retailers and suppliers. Property (ii) means that MRS-games are superadditive. Consequently, it makes sense for the grand coalition to form. We then consider how to allocate the profit of this all included coalition among suppliers and retailers. 

In the following we show that the class of MRS-games is balanced. Let $(N,M,W,C,P,\overline{Q})$ be a MRS-situation, the core of the corresponding MRS-game $(N,M,v)$ is defined as follows:

\begin{equation*}
Core(N,M,v)=\left\{ x\in \mathbb{R}^{\left\vert N\cup M\right\vert
}\left\vert 
\begin{array}{c}
\sum_{i\in N\cup M}x_{i}=v(N,M); \\ 
\sum_{i\in R\cup S}x_{i}\geq v(R,S),\forall S\subseteq N,\forall R\subseteq M%
\end{array}%
\right. \right\} .
\end{equation*}

Before showing that $Core(N,M,v) \neq \emptyset,$ we prove the following technical lemma.

\begin{lemma}
\label{lemma2} \label{Lemma2}Let $(N,M,W,C,P,\overline{Q})$ be a MRS-situation,
and $i\in R\subseteq N$. Then, for all $S\subseteq M:$%
\begin{equation*}
\Pi _{i}\left( q_{i}^{(N,M)},\Psi ^{M}\left( q_{N}^{(N,M)}\right) \right)
\geq \Pi _{i}\left( q_{i}^{(R,S)},\Psi ^{S}\left( q_{R}^{(R,S)}\right)
\right) ;
\end{equation*}
\end{lemma}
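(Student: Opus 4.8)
The plan is to bound $\Pi_i\left(q_i^{(R,S)},\Psi^S(q_R^{(R,S)})\right)$ from above by $\Pi_i\left(q_i^{(N,M)},\Psi^M(q_N^{(N,M)})\right)$ through a chain of three elementary moves: first enlarge the coalition of cooperating suppliers from $S$ to $M$, then enlarge the set of ordering retailers from $R$ to $N$, and finally pass from retailer $i$'s own order $q_i^{(R,S)}$ to its grand-coalition order $q_i^{(N,M)}$.

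First I would apply (P2) of Lemma \ref{Lemma 1} with $T=S\subseteq M$ at the fixed order $q^{(R,S)}$, which immediately gives $\Pi_i\left(q_i^{(R,S)},\Psi^S(q_R^{(R,S)})\right)\le \Pi_i\left(q_i^{(R,S)},\Psi^M(q_R^{(R,S)})\right)$: letting all suppliers share the cheapest technology can only help retailer $i$. The point to record here is that once every supplier cooperates the unit cost is $c_M(\cdot)$, and retailer $i$'s profit then depends on the other retailers only through the \emph{aggregate} order placed to the cooperating block, via the decreasing function $c_M$.

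Second, I would introduce the hybrid profile $\widehat q$ that keeps retailer $i$ at $\widehat q_i=q_i^{(R,S)}$ but lets every other retailer $k$ order its grand-coalition quantity $q_k^{(N,M)}$, with all suppliers cooperating. Since $R\setminus\{i\}\subseteq N\setminus\{i\}$ and orders are non-negative, (P3) of Lemma \ref{Lemma 1} yields $q_{\left(N\setminus\{i\}\right)M}^{(N,M)}\ge q_{\left(R\setminus\{i\}\right)M}^{(N,M)}\ge q_{\left(R\setminus\{i\}\right)M}^{(R,S)}$, so the aggregate order under $\widehat q$ is at least $q_{RM}^{(R,S)}$; because $c_M$ is decreasing this lowers retailer $i$'s unit cost while leaving its own order (hence its revenue) untouched, giving $\Pi_i\left(q_i^{(R,S)},\Psi^M(q_R^{(R,S)})\right)\le \Pi_i\left(\widehat q_i,\Psi^M(\widehat q_N)\right)$. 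Chaining the first two steps already proves the inequality with $q_i^{(R,S)}$ in place of $q_i^{(N,M)}$.

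The hard part is the final step, $\Pi_i\left(\widehat q_i,\Psi^M(\widehat q_N)\right)\le \Pi_i\left(q_i^{(N,M)},\Psi^M(q_N^{(N,M)})\right)$, i.e.\ replacing retailer $i$'s own order $q_i^{(R,S)}$ by $q_i^{(N,M)}$. One is tempted to invoke optimality of $q^{(N,M)}$, but $q^{(N,M)}$ maximizes the \emph{sum} $\sum_{k\in N}\Pi_k$, not retailer $i$'s individual profit: a larger order by $i$ lowers the common unit cost $c_M(q_{NM})$ for everyone, so there is a positive externality and $q_i^{(N,M)}$ need not be a best response for $i$ alone, whence the inequality does not follow from plain optimality (indeed $i$ may ``over-order''). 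The resolution I would pursue is to exploit the structural hypothesis that $c_j(q)q$ is non-decreasing, equivalently $\left\vert c_M'(q)\right\vert\le c_M(q)/q$, which bounds the externality; combined with the profitability condition $p_i(0)>c_j(0)$ this should force the cost reduction obtained at the grand-coalition aggregate order to dominate the revenue loss from over-ordering. Concretely, writing $g_i(x)=\left(p_i(x)-c_M(q_{NM}^{(N,M)})\right)x$, one reduces the step to $g_i(q_{iM}^{(N,M)})\ge g_i(q_{iM}^{(R,S)})$ and controls the two sides using that bound together with the first-order behaviour of the joint optimum. This externality/over-ordering issue is where I expect the real work to lie; the first two moves are routine consequences of Lemma \ref{Lemma 1}.
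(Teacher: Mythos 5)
Your first two moves are correct, and they coincide with the paper's own chain (read in reverse): the paper also passes from $\Psi^{S}$ to $\Psi^{M}$ via (P2), and enlarges the aggregate order using non-negativity of $q_{N\setminus R}^{(N,M)}$ together with (P3), exactly as you do. The genuine gap is your third step, which you yourself flag but never close: you do not prove $\Pi_i\left(\widehat q_i,\Psi^M\left(\widehat q_N\right)\right)\le \Pi_i\left(q_i^{(N,M)},\Psi^M\left(q_N^{(N,M)}\right)\right)$. The sketch you offer — reduce to $g_i\left(q_{iM}^{(N,M)}\right)\ge g_i\left(q_{iM}^{(R,S)}\right)$ with $g_i(x)=\left(p_i(x)-c_M\left(q_{NM}^{(N,M)}\right)\right)x$ and control things via the hypothesis that $c_j(q)q$ is non-decreasing — is not carried out, and it runs into the very obstruction you identified: at an interior joint optimum the first-order condition gives $g_i'\left(q_{iM}^{(N,M)}\right)=c_M'\left(q_{NM}^{(N,M)}\right)q_{NM}^{(N,M)}\le 0$, so $q_{iM}^{(N,M)}$ sits \emph{above} the maximizer of $g_i$, and nothing in your outline locates $q_{iM}^{(R,S)}$ relative to that maximizer in a way that forces the desired comparison. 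As it stands, the proposal is an incomplete proof.

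That said, your diagnosis of where the difficulty lies is exactly right, and it points at a weakness in the paper's own argument. The paper disposes of your ``hard part'' in one stroke by opening its chain with the equality
\begin{equation*}
\Pi _{i}\left( q_{i}^{(N,M)},\Psi ^{M}\left( q_{N}^{(N,M)}\right) \right)
= \max_{q_{i}}\Pi _{i}\left( q_{i},\Psi ^{M}\left( q_{N\backslash R}^{(N,M)}+q_{i}+q_{R\backslash \{i\}}^{(N,M)}\right) \right),
\end{equation*}
i.e., by asserting that at the joint optimum each retailer's order is a best response, in terms of its \emph{individual} profit, to the other retailers' grand-coalition orders. No justification is given, and this is precisely the claim you argue cannot be deduced from joint optimality because of the positive cost externality. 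Indeed it fails in general: take two identical retailers and one supplier with $p_i(q)=a-bq$ and $c(Q)=c_0-c_1Q$ on the relevant range (constant beyond, as in the paper's own examples, so all model assumptions hold). The joint optimum is symmetric with total order $Q^{*}=(a-c_0)/(b-2c_1)$, while retailer $i$'s individual best response to the other's order $Q^{*}/2$ equals $(a-c_0)(2b-3c_1)/\left(4(b-c_1)(b-2c_1)\right)$, which is strictly smaller than $Q^{*}/2$; each retailer over-orders at the joint optimum, so the displayed relation holds with strict inequality $<$, not equality. So the lemma's conclusion may well be true, but the step where all the difficulty concentrates is established neither by your proposal nor by the paper: the paper hides it inside an unproven best-response claim, whereas you at least name it and explain why it does not follow from optimality of $q^{(N,M)}$.
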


\begin{proof}
Using the notation stated above we have that: 
\begin{align*}
\Pi _{i}\left( q_{i}^{(N,M)},\Psi ^{M}\left( q_{N}^{(N,M)}\right) \right)
=& \max_{q_{i}}\Pi _{i}\left( q_{i},\Psi ^{M}\left( q_{N\backslash
R}^{(N,M)}+q_{i}+q_{R\backslash \{i\}}^{(N,M)}\right) \right)  \\
(\mbox{since }  q_{N\setminus R}^{(N,M)}\ge 0)  \geq &\max_{q_{i}}\Pi _{i}\left( q_{i},\Psi ^{M}\left( q_{i}+q_{R\backslash
\{i\}}^{(N,M)}\right) \right)  \\
(\mbox{since } q_i^{(R,S)}  \mbox{not necessarily optimal})  \geq &\Pi _{i}\left( q_{i}^{(R,S)},\Psi ^{M}\left(
q_{i}^{(R,S)}+q_{R\backslash \{i\}}^{(N,M)}\right) \right)  \\
(\mbox{by \textbf{P3} in Lemma \ref{Lemma 1}})    \geq &\Pi _{i}\left( q_{i}^{(R,S)},\Psi ^{M}\left( q_{R}^{(R,S)}\right)
\right)  \\
(\mbox{by \textbf{P2} in Lemma \ref{Lemma 1}}) \geq &\Pi _{i}\left( q_{i}^{(R,S)},\Psi ^{S}\left( q_{R}^{(R,S)}\right)
\right) .
\end{align*}
\hfill \end{proof}

\begin{theorem}
\label{th:5.5} Let $(N,M,W,C,P,\overline{Q})$ be a MRS-situation and $(N,M,v)$
be the corresponding MRS-game. Then, $(N,M,v)$ is balanced.
\end{theorem}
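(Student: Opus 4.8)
The plan is to prove balancedness by exhibiting a single explicit allocation and verifying that it lies in $Core(N,M,v)$; since the paper has recorded (Bondareva, 1963; Shapley, 1967) that a game is balanced if and only if its core is nonempty, this suffices. The natural candidate is the \emph{altruistic allocation} announced in the Introduction: give each retailer exactly the individual profit it earns inside the grand coalition and give every supplier nothing. Concretely, I would set
$$x_i = \Pi_i\left(q_i^{(N,M)},\Psi^M\left(q_N^{(N,M)}\right)\right) \text{ for } i\in N, \qquad x_j = 0 \text{ for } j\in M.$$

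First I would check efficiency. By the very definition of the characteristic function, $v(N,M)=\sum_{i\in N}\Pi_i\left(q_i^{(N,M)},\Psi^M\left(q_N^{(N,M)}\right)\right)$, so $\sum_{k\in N\cup M}x_k = \sum_{i\in N}x_i = v(N,M)$ holds immediately.

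The second and only substantive step is coalitional stability. Fix any $R\subseteq N$ and $S\subseteq M$. Since suppliers receive nothing, $\sum_{k\in R\cup S}x_k = \sum_{i\in R}\Pi_i\left(q_i^{(N,M)},\Psi^M\left(q_N^{(N,M)}\right)\right)$. Here I would apply Lemma \ref{Lemma2} term by term: for each $i\in R\subseteq N$ and the given $S\subseteq M$, that lemma states precisely that $\Pi_i\left(q_i^{(N,M)},\Psi^M\left(q_N^{(N,M)}\right)\right)\geq \Pi_i\left(q_i^{(R,S)},\Psi^S\left(q_R^{(R,S)}\right)\right)$. Summing over $i\in R$ gives $\sum_{k\in R\cup S}x_k \geq \sum_{i\in R}\Pi_i\left(q_i^{(R,S)},\Psi^S\left(q_R^{(R,S)}\right)\right) = v(R,S)$. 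The degenerate supplier-only coalitions $(\emptyset,S)$ are covered as well, since both sides then equal $0$.

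There is essentially no hard part left once Lemma \ref{Lemma2} is in hand: the whole argument reduces to summing the per-retailer dominance inequalities established there, and the decision to load all surplus onto the retailers is exactly what makes the supplier-side constraints trivial. The one point I would be careful about is that Lemma \ref{Lemma2} is quantified over \emph{all} $S\subseteq M$, so it genuinely applies to every coalition $(R,S)$ appearing in the stability requirement; this uniform quantification is what allows one fixed allocation to dominate all subcoalitions simultaneously, which is the crux of nonemptiness of the core.
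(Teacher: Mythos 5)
Your proof is correct and is essentially identical to the paper's own argument: both exhibit the altruistic allocation $x^{a}(v)$ (retailers get their grand-coalition profits, suppliers get zero), verify efficiency from the definition of $v(N,M)$, and obtain coalitional stability by summing the per-retailer inequality of Lemma \ref{Lemma2} over $i\in R$, with supplier-only coalitions handled trivially. Nothing is missing; your remark about the uniform quantification of Lemma \ref{Lemma2} over all $S\subseteq M$ is exactly the point the paper relies on as well.
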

\begin{proof}
The allocation $x^{a}(v)$ defined by 
$x_{i}^{a}(v):=\Pi _{i}\left( q_{i}^{(N,M)},\Psi ^{M}\left(
q_{N}^{(N,M)}\right) \right) ,$ for all $i\in N $ and $x_{j}^{a}(v):=0,$ for all $j\in M$. In this allocation the retailer receives all the profit from cooperation with the suppliers,
whereas each supplier receives nothing. Note that

\begin{equation*}
\sum_{i\in N\cup M}x_{i}^{a}(v)=\sum_{i\in N}\Pi _{i}\left(
q_{i}^{(N,M)},\Psi ^{M}\left( q_{N}^{(N,M)}\right) \right) =v(N,M).
\end{equation*}

Hence, $x^{a}(v)$ is an efficient allocation. Consider coalitions $%
\emptyset \neq R\subseteq N$ and $S\subseteq M.$ Then

\begin{eqnarray*}
\sum_{i\in R\cup S}x_{i}^{a}(v) &=&\sum_{i\in R}\Pi _{i}\left(
q_{i}^{(N,M)},\Psi ^{M}\left( q_{N}^{(N,M)}\right) \right) \\
&\geq &\sum_{i\in R}\Pi _{i}\left( q_{i}^{(R,S)},\Psi ^{S}\left(
q_{R}^{(R,S)}\right) \right) =v(R,S),
\end{eqnarray*}%
where the last inequality follows from Lemma \ref{Lemma2}. Finally, $%
v(\emptyset ,S)=0=\sum_{i\in S}x_{i}^{a}(v)$ for all $S\subseteq M.$
Therefore, we conclude that the game is balanced. \hfill 
\end{proof}

\bigskip The allocation $x^{a}(v)$ proposed in the proof ot the previous theorem, that will call from now on the altruistic allocation, is inspired by the one used in Guardiola et al. (2007). The following example illustrates a MRS-situation with two retailers and two suppliers.

\begin{example}
\label{Examp2}Consider the next example with 2 retailers and 2 suppliers
where: $p_{1}(q)=5-\frac{q}{10}$ $,p_{2}(q)=8-\frac{q}{5}$ and $\overline{q}%
_{11}=\overline{q}_{12}=\overline{q}_{21}=\overline{q}_{22}=10.$%
\begin{equation*}
w_{1}(q)=\left\{ 
\begin{array}{cc}
4-\frac{q}{20}, & 0\leq q\leq 20, \\ 
3, & q>20%
\end{array}%
\right. \text{ \ \ \ \ }c_{1}(q)=\left\{ 
\begin{array}{cc}
3-\frac{q}{20}, & 0\leq q\leq 30, \\ 
1.5, & q>30%
\end{array}%
\right. 
\end{equation*}%
\begin{equation*}
w_{2}(q)=\left\{ 
\begin{array}{cc}
3-\frac{q}{30}, & 0\leq q\leq 30, \\ 
2, & q>30.%
\end{array}%
\right. \text{ \ \ \ \ }c_{2}(q)=\left\{ 
\begin{array}{cc}
2-\frac{q}{30}, & 0\leq q\leq 40, \\ 
\frac{2}{3}, & q>40%
\end{array}%
\right. 
\end{equation*}%
\begin{equation*}
c_{\{1,2\}}(q)=\left\{ 
\begin{array}{cc}
2-\frac{q}{30}, & 0\leq q\leq 40, \\ 
\frac{2}{3}, & q>40%
\end{array}%
\right. 
\end{equation*}

It follows that $q_{1}^{\ast }=50,q_{2}^{\ast }=40.$ Solving the
corresponding optimization problems for the different coalitions $(R,S)$ we
obtain the following table:%
\begin{table}[!h]
\begin{center}
\begin{tabular}{|c|c|c|c|c|} 
 \hline
 $R$ & $S$ & $q_{1}^{(R,S)}$ & $q_{2}^{(R,S)}$ & $v(R,S)$ \\ [0.5ex] 
 \hline\hline
$\{1\}$ & $\emptyset$  & $\left( 0,10\right)$  & - & $13\frac{1}{3}$ \\ \hline
$\{1\}$ & $\{1\}$ & $\left( 10,0\right)$  & - & $15$ \\ \hline
$\{1\}$ & $\{2\}$ & $\left( 0,10\right)$  & - & $23\frac{1}{3}$ \\ \hline
$\{1\}$ & $\{1,2\}$ & $\left( 10,10\right)$  & - & $33\frac{1}{3}$ \\ \hline
$\{2\}$ & $\emptyset$  & - & $\left( 0,10\right)$  & $33\frac{1}{3}$ \\ \hline
$\{2\}$ & $\{1\}$ & - & $\left( 10,3\right)$  & $36\frac{1}{2}$ \\ \hline
$\{2\}$ & $\{2\}$ & - & $\left( 0,10\right)$  & $43\frac{1}{3}$ \\ \hline
$\{2\}$ & $\{1,2\}$ & - & $\left( 9,9\right)$  & $54$ \\ \hline
$\{1,2\}$ & $\emptyset$  & $\left( 0,10\right)$  & $\left( 0,10\right)$  & $53\frac{1}{3}$ \\ \hline
$\{1,2\}$ & $\{1\}$ & $\left( 10,1\frac{2}{3}\right)$  & $\left( 10,3\frac{1}{3}\right)$  & $61\frac{2}{3}$ \\ \hline
$\{1,2\}$ & $\{2\}$ & $\left( 0,10\right)$  & $\left( 0,10\right)$  & $73\frac{1}{3}$\\ \hline
$\{1,2\}$ & $\{1,2\}$ & $\left( 10,10\right)$  & $\left( 10,10\right)$  & $113\frac{1}{3}$ \\ 
\hline
\end{tabular}
\caption{\label{table 1}}
\end{center}
\end{table}

The reader can easily check that $x^{a}(v)=\left( 46\frac{2}{3},66\frac{2}{3},0,0\right) $ belongs to the core.
\end{example}

In the next section, we propose and characterize a new allocation for these MRS-games which compensates suppliers with some benefit for their cooperation.

\section{An allocation rule that compensates suppliers}

The previous section refers to the altruistic allocation as a single-valued aloocation for MRS-games. This allocation is always in the core, but it may be questioned because it gives some suppliers a zero payoff, even though they can be  necessary for the highest total profits. Our goal is to identify solutions that recognize the value of all the suppliers who contribute to get the greatest joint production. 

This section is devoted to provide a more desirable allocation for all the
suppliers since with the altruistic allocation some of them receive a null payment even though they are necessary for the grand coalition to obtain the final benefit.
To do so, let us take the altruistic allocation as a starting point and from
there, we will build an alternative single-valued solution $\varphi $ taking 
into account the following five desirable properties on the class of MRS-games $(N,M,v)$:

\begin{description}
\item[(EF)] Efficiency. $\sum_{i\in NUM}\varphi _{i}(v)=v(N,M).$

\item[(SR)] Stability for retailers. $\sum_{i\in R}\varphi _{i}(v)\geq
v(R,M\backslash \{j\})$ for all coalitions $R\subseteq N$ and all $j\in M.$

\item[(RR)] Retailer reduction. $\varphi _{i}(v)=x_{i}^{a}(v)-\frac{%
\sum_{k\in R}x_{k}^{a}(v)-v(R,M\backslash \{j\})}{\left\vert R\right\vert }$
for some coalition $\emptyset \neq R\subseteq N$ with $i\in R$ and some $%
j\in M.$

\item[(PD)] Preservation of differences for retailers. $\varphi
_{i}(v)-\varphi _{i^{\prime }}(v)=x_{i}^{a}(v)-x_{i^{\prime }}^{a}(v)$ for
all $i,i^{\prime }\in N$.

\item[(PP)] Proportionality to the production. $\varphi _{j}(v)\cdot
q_{Nj^{\prime }}^{(N,M)}=\varphi _{j^{\prime }}(v)\cdot q_{Nj}^{(N,M)}$ for
all $j,j^{\prime }\in M$ and for some optimal solution $q^{(N,M)}.$
\end{description}

\noindent \textit{Efficiency} implies that the overall surplus induced by
cooperation is allocated among the players. While 
\textit{stability for retailers} ensures that any coalition of retailers is
coalitionally stable with respect to the coalitions they form in which a supplier is missing. The \textit{retailer reduction } property states that
the amount received by each retailer is less than the amount allocated to it
by the altruistic solution. This reduction equals $\left( \sum_{i\in
R}x_{i}^{a}(v)-v(R,M\backslash \{j\})\right) /\left\vert R\right\vert ,$
which is the per capita gain for the retailers coalition $R$ from cooperation
with any coalition in which a supplier is missing. The\textit{\ preservation of differences for
retailers} property is a modification of the \textit{preservation of
differences} property by Hart and Mas-Colell (1989). This property ensures that the
difference in the allocation that assigned the altruistic solution to two
retailers must be maintained. Finally, \textit{%
proportionality to the production } property requires that the allocation of any
two suppliers must be in the same proportion as the quantities they produce.

The main result in this section states that there exists a unique solution
for MRS-games satisfying the properties (EF), (SR), (RR), (PD) and (PP).

\begin{theorem}
Let $(N,M,W,C,P,\overline{Q})$ be a MRS-situation, $(N,M,v)$ the
corresponding MRS-game and $q^{(N,M)}$ an optimal solution. The unique
solution $\xi $ on the class of MRS-games, satisfying (EF), (SR), (RR), (PD)
and (PP) is $\xi (v)=\left( \xi _{i}(v)\right) _{i\in N\cup M}$ defined by 
\begin{equation*}
\xi _{i}(v):=\left\{ 
\begin{array}{ll}
\Pi _{i}\left( q_{i}^{(N,M)},\Psi ^{M}\left( q_{N}^{(N,M)}\right) \right)
-\beta , & i\in N, \\ 
&  \\ 
\frac{q_{Ni}^{(N,M)}\cdot \left\vert N\right\vert \cdot \beta }{%
q_{NM}^{(N,M)}}, & i\in M,%
\end{array}%
\right. 
\end{equation*}%
where $\beta :=\underset{\emptyset \neq R\subseteq N,j\in M}{\min }%
\left\{ \frac{\sum_{i\in R}x_{i}^{a}(v)-v(R,M\backslash \{j\})}{\left\vert
R\right\vert }\right\} .$
\end{theorem}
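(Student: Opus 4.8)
The plan is to prove the two usual halves of a characterization: first that $\xi$ is well defined and satisfies the five axioms (sufficiency), and then that the axioms leave no freedom, forcing any admissible $\varphi$ to coincide with $\xi$ (uniqueness). Before either half I would record that $\xi$ makes sense. The number $\beta$ is a minimum over a finite index set, hence attained, and it is nonnegative: by Lemma \ref{Lemma2}, for every $R$ and $j$ and every $k\in R$ we have $x_k^a(v)=\Pi_k(q_k^{(N,M)},\Psi^M(q_N^{(N,M)}))\ge \Pi_k(q_k^{(R,M\setminus\{j\})},\Psi^{M\setminus\{j\}}(q_R^{(R,M\setminus\{j\})}))$, so that $\sum_{k\in R}x_k^a(v)\ge v(R,M\setminus\{j\})$. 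Moreover $q_{NM}^{(N,M)}>0$, since $v(N,M)>0$ by Proposition \ref{lemma}(i), so the supplier coordinates are well defined.

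For the sufficiency half I would check the axioms one at a time, writing $x_i^a(v)=\Pi_i(q_i^{(N,M)},\Psi^M(q_N^{(N,M)}))$ throughout. (PD) and (PP) are immediate from the algebraic shape of $\xi$: the common additive term $-\beta$ cancels in every retailer difference, and $\xi_j(v)=\frac{|N|\beta}{q_{NM}^{(N,M)}}\,q_{Nj}^{(N,M)}$ is by construction proportional to $q_{Nj}^{(N,M)}$. (EF) follows by summing the two blocks, using $\sum_{i\in N}x_i^a(v)=v(N,M)$ and $\sum_{j\in M}q_{Nj}^{(N,M)}=q_{NM}^{(N,M)}$, so the $-|N|\beta$ contributed by the retailers is returned exactly by the suppliers. (SR) holds because $\beta$ is the minimum of the per-capita gains: for any $R$ and $j$, $\sum_{i\in R}\xi_i(v)=\sum_{i\in R}x_i^a(v)-|R|\beta\ge \sum_{i\in R}x_i^a(v)-\bigl(\sum_{k\in R}x_k^a(v)-v(R,M\setminus\{j\})\bigr)=v(R,M\setminus\{j\})$.

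The uniqueness half is the clean part. Suppose $\varphi$ satisfies the five axioms. By (PD) the difference $\varphi_i(v)-x_i^a(v)$ is the same for all retailers; call it $-\gamma$, so $\varphi_i(v)=x_i^a(v)-\gamma$ on $N$. Applying (RR) to a single retailer shows $\gamma$ equals the per-capita gain $\bigl(\sum_{k\in R}x_k^a(v)-v(R,M\setminus\{j\})\bigr)/|R|$ of some pair $(R,j)$, whence $\gamma\ge\beta$ by minimality of $\beta$. Conversely (SR) applied to an arbitrary pair gives $\sum_{i\in R}x_i^a(v)-|R|\gamma\ge v(R,M\setminus\{j\})$, i.e. $\gamma\le\bigl(\sum_{k\in R}x_k^a(v)-v(R,M\setminus\{j\})\bigr)/|R|$ for every $(R,j)$, hence $\gamma\le\beta$. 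Therefore $\gamma=\beta$ and the retailer coordinates are pinned down to $\xi_i(v)$. On the supplier side, (PP) forces $\varphi_j(v)=\lambda\,q_{Nj}^{(N,M)}$ for a common scalar $\lambda$ (and $\varphi_j(v)=0$ whenever $q_{Nj}^{(N,M)}=0$); then (EF), together with $\sum_{i\in N}\varphi_i(v)=v(N,M)-|N|\beta$, yields $\lambda\,q_{NM}^{(N,M)}=|N|\beta$, i.e. $\lambda=|N|\beta/q_{NM}^{(N,M)}$, reproducing $\xi_j(v)$. Hence $\varphi=\xi$.

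The step I expect to fight with is verifying (RR) in the sufficiency half. Since every retailer receives the same reduction $\beta$, (RR) demands that each $i\in N$ belong to at least one pair $(R,j)$ attaining the minimum $\beta$, and this is not automatic: a priori the minimizing coalition could exclude some retailer. I would attack this through the monotonicity of $\Pi_i$ in the retailer coalition --- an analogue of (P3) and Lemma \ref{Lemma2} showing $\Pi_i(q_i^{(N,S)},\Psi^S(q_N^{(N,S)}))\ge \Pi_i(q_i^{(R,S)},\Psi^S(q_R^{(R,S)}))$ for fixed $S$ and $R\subseteq N$ --- to control how the per-capita gain evolves as coalitions grow, and thereby to exhibit, for each retailer, a minimizing pair containing it (for instance by enlarging a given minimizer, or by showing the grand retailer coalition attains the minimum for a suitably chosen missing supplier). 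Establishing this containment cleanly, rather than the routine axiom checks, is where the genuine content of the sufficiency direction lies.
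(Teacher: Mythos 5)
Your uniqueness half is correct and is essentially the paper's own proof: (RR) plus (PD) force a common retailer reduction $\gamma$, (SR) gives $\gamma\le\beta$, attainment of $\gamma$ as a per-capita gain gives $\gamma\ge\beta$, and (PP) together with (EF) then fixes the supplier coordinates. Your verifications of (EF), (SR), (PD), (PP) and of well-definedness ($\beta\ge 0$ via Lemma \ref{Lemma2}, $q_{NM}^{(N,M)}>0$) are also fine; the paper itself disposes of the entire sufficiency direction with ``It is clear''.

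The difficulty you flagged in (RR) is real, but the repair you propose cannot succeed, because under your reading of the axiom the claim is actually false. Write $g(R,j):=\bigl(\sum_{k\in R}x_k^a(v)-v(R,M\setminus\{j\})\bigr)/\left\vert R\right\vert$. You read (RR) as: every retailer $i$ must lie in some pair $(R,j)$ with $g(R,j)=\beta$. Now take two suppliers with constant wholesale prices and (essentially) constant unit production costs $c_1<c_2<w_1<w_2$, and capacities such that supplier $1$'s capacity covers all of retailer $1$'s demand at cost $c_1$ but not retailer $2$'s. Constant unit prices make every coalitional value separate across retailers, so each $g(R,j)$ is the average over $i\in R$ of individual gains $G_i^j$. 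Here $G_1^2=0$ (retailer $1$ never needs supplier $2$), hence $\beta=0$, while $G_2^1,G_2^2>0$ (retailer $2$ strictly needs both suppliers); consequently every pair containing retailer $2$ has strictly positive per-capita gain, and retailer $2$ belongs to no minimizing pair. Enlarging a minimizer or passing to $R=N$ only increases the average, so no monotonicity lemma in the spirit of (P3) or Lemma \ref{Lemma2} can deliver the containment you are after --- and since your own uniqueness argument shows $\xi$ is the only candidate, under your reading the theorem itself would fail in such situations.

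What saves the statement --- and what the paper's one-line sufficiency implicitly assumes --- is the existential reading of (RR): there exist one coalition $\emptyset\neq R\subseteq N$ and one $j\in M$ such that the displayed identity holds for the retailers $i\in R$; retailers outside $R$ are not constrained by the axiom. Under this reading $\xi$ satisfies (RR) trivially (take any pair attaining $\beta$), and uniqueness is unharmed: (RR) fixes the reduction $g(R_0,j_0)$ for the members of that one coalition, (PD) propagates the same reduction to all of $N$, and (SR) plus minimality of $\beta$ force it to equal $\beta$, exactly as you argue. So the missing step is not a lemma to be proved but a re-reading of the axiom; with that change your proof closes and coincides with the paper's.
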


\begin{proof}
It is clear that $\xi (v)$ satisfies (EF), (SR), (RR), (PD) and (PP).

To show the converse, take an optimal solution $q^{(N,M)}$and a solution $%
\varphi $ on the class of MRS-games that satisfies (EF), (SR), (RR), (PD)
and (PP). By (RR), $\varphi _{i}(v)=x_{i}^{a}(v)-\alpha _{i}$ with $\alpha
_{i}=\frac{1}{\left\vert R\right\vert }\left( \sum_{k\in
R}x_{k}^{a}(v)-v(R,M\backslash \{j\})\right) $ for some coalition $\emptyset
\neq R\subseteq N$, for each retailer $i$ and some $j\in M$. By (PD), $%
\alpha _{i}=\alpha _{i^{\prime }}$ for all $i,i^{\prime }\in N$ with $i\neq
i^{\prime }$. This implies $\alpha _{i}=\alpha _{\ast }$ for all $i\in N$.
According to (SR) $\sum_{i\in R}\varphi _{i}(v)=\sum_{i\in
R}x_{i}^{a}(v)-\left\vert R\right\vert \cdot \alpha _{\ast }\geq
v(R,M\backslash \{j\})$ for all coalitions $R\subseteq N$ and all $j\in M$,
or equivalently $\alpha _{\ast }\leq \frac{1}{\left\vert R\right\vert }%
\left( \sum_{i\in R}x_{i}^{a}(v)-v(R,M\backslash \{j\})\right) $ for all
coalition $R\subseteq N$ and all $j\in M$. But then, $\alpha _{\ast }=\beta
=\min_{\emptyset \neq R\subseteq N,j\in M}\{\left( \sum_{i\in
R}x_{i}^{a}(v)-v(R,M\backslash \{j\})\right) /\left\vert R\right\vert \}$.
Moreover, there is at least one supplier from whom an order is placed ($%
q_{NM}^{(N,M)}$ $\neq 0$) . Therefore, we consider $j\in M$ such that $%
q_{Nj}^{(N,M)}$ $\neq 0$, then by (PP) $\varphi _{k}(v)=\frac{\varphi
_{j}(v)\cdot q_{Nk}^{(N,M)}}{q_{Nj}^{(N,M)}}$ for all $k\in M.$ Finally, by
(EF) $\sum_{i\in N}\varphi _{i}(v)+\sum_{k\in M}\varphi _{k}(v)=\sum_{i\in
N}x_{i}^{a}(v)-\left\vert N\right\vert \cdot \beta +\sum_{k\in M}\frac{%
\varphi _{j}(v)\cdot q_{Nk}^{(N,M)}}{q_{Nj}^{(N,M)}}=v(N,M)$ and we obtain
that $\varphi _{j}(v)=\frac{q_{Nj}^{(N,M)}\cdot \left\vert N\right\vert
\cdot \beta }{q_{NM}^{(N,M)}}.$ If $q_{Nj}^{(N,M)}$ $=0$ then by (PP) $%
\varphi _{j}(v)=0.$ Thus, we conclude\ $\varphi =\xi $.\hfill 
\end{proof}

\bigskip The reader may notice that for a MRS-situation $(N,M,W,C,P,%
\overline{Q})$ there exists a unique MRS-game $(N,M,v)$ although we take
several different optimal solutions. However, $\xi (v)$ could change
depending on the choice of those optimal solutions. In the allocation, retailers compensate to each suppliers with the same amount $\left\vert N\right\vert \cdot \beta ,$
weighted by their contribution to total production. This allocation $\xi (v)$ is called Supplier Compensation allocation (henceforth SC-allocation). The following proposition shows that the SC-allocation is a core allocation regardless of the optimal solution.

\begin{proposition}
Let $(N,M,W,C,P,\overline{Q})$ be a MRS-situation, $(N,M,v)$ the
corresponding MRS-game and $q^{(N,M)}$ an optimal solution. Then, $\xi (v)\in Core(N,M,v)$.
\end{proposition}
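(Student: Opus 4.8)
The plan is to verify directly the two defining conditions of the core: efficiency and the coalitional stability inequalities. Efficiency, $\sum_{i\in N\cup M}\xi_i(v)=v(N,M)$, is already in hand, since the preceding theorem shows that $\xi$ satisfies (EF). The entire task therefore reduces to proving $\sum_{i\in R\cup S}\xi_i(v)\geq v(R,S)$ for every $R\subseteq N$ and $S\subseteq M$. Two preliminary observations drive the whole argument. First, since $x^a(v)\in Core(N,M,v)$ by Theorem~\ref{th:5.5} and suppliers receive nothing under $x^a$, we get $\sum_{i\in R}x_i^a(v)\geq v(R,M\setminus\{j\})$ for every $\emptyset\neq R\subseteq N$ and $j\in M$; hence each quotient defining $\beta$ is nonnegative, so $\beta\geq 0$. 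Second, because $q_{NM}^{(N,M)}>0$ and $\beta\geq 0$, every supplier share $\xi_j(v)=q_{Nj}^{(N,M)}\,|N|\,\beta/q_{NM}^{(N,M)}$ is nonnegative.

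With these facts, I would split the verification according to whether $S=M$ or $S\subsetneq M$, since the redistribution of the total reduction $|N|\beta$ behaves differently in the two regimes. The case $R=\emptyset$ is immediate, as $v(\emptyset,S)=0$ and the supplier shares are nonnegative. For $S\subsetneq M$, fix any $j\in M\setminus S$, discard the nonnegative supplier contributions, and keep only the retailer terms:
$$\sum_{i\in R\cup S}\xi_i(v)\;\geq\;\sum_{i\in R}\xi_i(v)\;=\;\sum_{i\in R}x_i^a(v)-|R|\beta.$$
The definition of $\beta$ as a minimum gives $\beta\leq\bigl(\sum_{i\in R}x_i^a(v)-v(R,M\setminus\{j\})\bigr)/|R|$, i.e.\ $\sum_{i\in R}x_i^a(v)-|R|\beta\geq v(R,M\setminus\{j\})$; and since $S\subseteq M\setminus\{j\}$, monotonicity (Proposition~\ref{lemma}(iii)) yields $v(R,M\setminus\{j\})\geq v(R,S)$, closing this case.

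For $S=M$ there is no omitted supplier to feed the $\beta$-bound, so I would instead compute the sum exactly. The full reduction is returned to the suppliers, giving
$$\sum_{i\in R\cup M}\xi_i(v)=\sum_{i\in R}x_i^a(v)-|R|\beta+|N|\beta\cdot\frac{q_{NM}^{(N,M)}}{q_{NM}^{(N,M)}}=\sum_{i\in R}x_i^a(v)+(|N|-|R|)\beta.$$
Since $|R|\leq|N|$ and $\beta\geq 0$, this is at least $\sum_{i\in R}x_i^a(v)\geq v(R,M)$, the last inequality again being core membership of $x^a$. This completes the verification, and since both cases use only $x^a(v)\in Core$ and $\beta\geq 0$, the conclusion is independent of the particular optimal solution $q^{(N,M)}$ chosen.

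The main obstacle — indeed the only genuinely non-mechanical point — is recognizing that the $\beta$-reduction argument precisely covers the coalitions that drop at least one supplier, whereas the borderline coalitions with $S=M$ escape that argument and must be treated through the exact redistribution identity, where nonnegativity of $\beta$ combined with $|R|\leq|N|$ supplies the needed slack.
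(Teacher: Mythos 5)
Your proof is correct and follows essentially the same route as the paper's: both split into the case $S\subsetneq M$ (drop the nonnegative supplier shares and use the minimum defining $\beta$ together with monotonicity) and the case $S=M$ (exact computation giving $\sum_{i\in R}x_i^a(v)+(|N|-|R|)\beta\geq v(R,M)$). You merely make explicit some steps the paper leaves implicit, namely $\beta\geq 0$, the nonnegativity of the supplier shares, and the $R=\emptyset$ case.
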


\begin{proof}
It is clear that $\xi (v)$ satisfies (EF) and (SR). Therefore, we have that $%
\sum_{i\in R\cup S}\xi _{i}(v)\geq \sum_{i\in R}\xi _{i}(v)\geq
v(R,M\backslash \{j\}),$ for all supplier $j$. Hence, $\sum_{i\in R\cup S}\xi _{i}(v)\geq v(R,S)$
for all $R\subseteq N$ and for all $S\varsubsetneq M.$ Finally, 

\begin{eqnarray*}
\sum_{i\in R\cup M}\xi _{i}(v) &=&\sum_{i\in R}x_{i}^{a}(v)-\left\vert
R\right\vert \cdot \beta +\sum_{j\in M}\frac{q_{Nj}^{(N,M)}\cdot \left\vert
N\right\vert \cdot \beta }{q_{NM}^{(N,M)}} \\
&=&\sum_{i\in R}x_{i}^{a}(v)-\left\vert R\right\vert \ \cdot \beta
+\left\vert N\right\vert \cdot \beta  \\
&\geq&\sum_{i\in R}x_{i}^{a}(v)\geq v(R,M).
\end{eqnarray*}

We conclude that $\xi (v)\in Core(N,M,v).$\hfill 
\end{proof}

\bigskip In the following example we compute the MRS-allocation for the data in Example \ref{Examp2} and compare it with the altruistic allocation.

\begin{example}
Consider again the data in Example \ref{Examp2}. Recall that the benefit of the grand coalition is $V(N,M)=113\frac{1}{3}$. We compare the SC-allocation with the altruistic allocation for this example.

\begin{center}
\begin{tabular}{c|c}
$x^{a}(v)$ & $\xi (v)$ \\ \hline
&  \\ 
$\left( 
\begin{array}{c}
46\frac{2}{3} \\ 
66\frac{2}{3} \\ 
0 \\ 
0%
\end{array}%
\right) $ & $\left( 
\begin{array}{c}
26\frac{2}{3} \\ 
46\frac{2}{3} \\ 
20 \\ 
20%
\end{array}%
\right) $%
\end{tabular}
\end{center}

\noindent with $\beta :=\underset{\emptyset \neq R\subseteq N,j\in M}{\min }\left\{ 
\frac{\sum_{i\in R}x_{i}^{a}(v)-v(R,M\backslash \{j\})}{\left\vert
R\right\vert \ }\right\} =\min \{46\frac{2}{3}-23\frac{1}{3},46\frac{2}{3}%
-15,66\frac{2}{3}-43\frac{1}{3},66\frac{2}{3}-36\frac{1}{2},\frac{113\frac{1%
}{3}-73\frac{1}{3}}{2},\frac{113\frac{1}{3}-61\frac{2}{3}}{2}\}=20.$

The reader may notice that, as we have already pointed out, the
MRS-allocation reduces the retailers' profit $(20$,$20)$ to give this profit
to the suppliers
\end{example}

To conclude this section, the five examples below illustrate that the properties (EF), (SR), (RR), (PD) and (PP) are logically independent.

\begin{example} (EF fails)
Consider the single-solution $\varphi $ on the class of RMS-games defined by 
\begin{equation*}
\varphi _{i}(v):=\left\{ 
\begin{array}{ll}
\Pi _{i}\left( q_{i}^{(N,M)},\Psi ^{M}\left( q_{N}^{(N,M)}\right) \right)
-\beta , & i\in N, \\ 
0, & i\in M,%
\end{array}%
\right. 
\end{equation*}%
where $\beta :=\underset{\emptyset \neq R\subseteq N,j\in M}{\min }\left\{ 
\frac{\sum_{i\in R}x_{i}^{a}(v)-v(R,M\backslash \{j\})}{\left\vert
R\right\vert }\right\} .$ $\varphi $ satisfies (SR), (RR), (PD) and (PP) but
not (EF).
\end{example}

\begin{example} (SR fails)
Let a single-solution $\varphi $ on the class of RMS-games be defined by 
\begin{equation*}
\varphi _{i}(v):=\left\{ 
\begin{array}{ll}
\Pi _{i}\left( q_{i}^{(N,M)},\Psi ^{M}\left( q_{N}^{(N,M)}\right) \right)
-\beta ^{\ast }, & i\in N, \\ 
\frac{q_{Ni}^{(N,M)}\cdot \left\vert N\right\vert \cdot \beta ^{\ast }}{%
q_{NM}^{(N,M)}}, & i\in M,%
\end{array}%
\right. 
\end{equation*}%
where $\beta ^{\ast }:=\underset{\emptyset \neq R\subseteq N,j\in M}{\max }%
\left\{ \frac{\sum_{i\in R}x_{i}^{a}(v)-v(R,M\backslash \{j\})}{\left\vert
R\right\vert }\right\} .$ $\varphi $ satisfies (EF), (RR), (PD) and (PP) but
not (SR).
\end{example}

\begin{example} (RR fails)
We take a single-solution $\varphi $ on the class of RMS-games defined by 
\begin{equation*}
\varphi _{i}(v):=\left\{ 
\begin{array}{ll}
\Pi _{i}\left( q_{i}^{(N,M)},\Psi ^{M}\left( q_{N}^{(N,M)}\right) \right)
-\beta ^{\ast }, & i\in N, \\ 
\frac{q_{Ni}^{(N,M)}\cdot \left\vert N\right\vert \cdot \beta ^{\ast }}{%
q_{NM}^{(N,M)}}, & i\in M,%
\end{array}%
\right. 
\end{equation*}%
where $\beta ^{\ast }:=\underset{\emptyset \neq R\subseteq N,j\in M}{\min }%
\left\{ \frac{\sum_{i\in R}x_{i}^{a}(v)-v(R,M\backslash \{j\})}{2\cdot
\left\vert R\right\vert }\right\} .$ $\varphi $ satisfies (EF), (SR), (PD)
and (PP) but not (RR).
\end{example}

\begin{example} (PD fails)
Take a single-solution $\varphi $ on the class of RMS-games defined by 
\begin{equation*}
\varphi _{i}(v):=\left\{ 
\begin{array}{ll}
\Pi _{i}\left( q_{i}^{(N,M)},\Psi ^{M}\left( q_{N}^{(N,M)}\right) \right)
-\beta _{i}, & i\in N, \\ 
\frac{q_{Ni}^{(N,M)}\cdot \sum_{k\in N}\beta _{k}}{q_{NM}^{(N,M)}}, & i\in M,%
\end{array}%
\right. 
\end{equation*}%
with $\beta _{i}:=\underset{j\in M}{\min }\left\{ \frac{\sum_{i\in R^{\ast
}}x_{i}^{a}(v)-v(R^{\ast },M\backslash \{j\})}{\left\vert R^{\ast
}\right\vert }\right\} $ if $i\in N\backslash \{k\}$ where $R^{\ast }$ is
the coalition that minimizes $\frac{\sum_{i\in
R}x_{i}^{a}(v)-v(R,M\backslash \{j\})}{\left\vert R\right\vert }$ for all $%
R\subseteq N$ and for all $j\in M$. Moreover, retailer $k$ satisfies that $%
k\notin R^{\ast }$ and $\beta _{k}=\frac{\sum_{i\in
F}x_{i}^{a}(v)-v(F,M\backslash \{j\})}{\left\vert F\right\vert }$ with $%
F\neq R^{\ast }$ but satisfying that $\sum_{i\in R\cup \{k\}}\varphi
_{i}(v)\geq v(R\cup \{k\},M\backslash \{j\})$ for all coalitions $R\subseteq
N\backslash \{k\}$ and all $j\in M.$ $\varphi $ satisfies (EF), (SR), (RR)
and (PP) but not (PD).
\end{example}

\begin{example} (PP fails)
Define $\varphi $ on the class of RMS-games by 
\begin{equation*}
\varphi _{i}(v):=\left\{ 
\begin{array}{ll}
\Pi _{i}\left( q_{i}^{(N,M)},\Psi ^{M}\left( q_{N}^{(N,M)}\right) \right)
-\beta , & i\in N, \\ 
\frac{\left\vert N\right\vert \cdot \beta }{\left\vert M\right\vert }, & 
i\in M,%
\end{array}%
\right. 
\end{equation*}%
where $\beta :=\underset{\emptyset \neq R\subseteq N,j\in M}{\min }\left\{ 
\frac{\sum_{i\in R}x_{i}^{a}(v)-v(R,M\backslash \{j\})}{\left\vert
R\right\vert }\right\} .$ $\varphi $ satisfies (EF), (SR), (RR) and (PD) but
not (PP).
\end{example}

\section{MRS-games with unbounded production}

It is important for the reader to note that retailers' ordering decisions are influenced by suppliers' production limits. Therefore, an important, realistic case of MRS-situations that  should  be analyzed is that  in which the production limits do not affect the choice of optimal solutions in respective optimization problems, i.e., $\overline{q}%
_{ij}=K$, for all $i\in N$ and $j\in M,$ with $K\in \mathbb{R}$ large
enough. We will denote this unbounded situation as $(N,M,W,C,P,\infty).$ The
following example illustrates an unbounded situation with two retailers and
two suppliers.

\begin{example}
\label{Exep1}Consider a MRS-situation with $2$ retailers and 2 suppliers: $%
p_{1}(q)=30-\frac{q}{10},p_{2}(q)=27-\frac{q}{40}$, $\overline{Q}$ values large enough,%
\begin{equation*}
w_{1}(q)=\left\{ 
\begin{array}{cc}
20-\frac{q}{160}, & 0\leq q\leq 1600, \\ 
10, & q>1600%
\end{array}%
\right. \text{ \ \ \ \ }c_{1}(q)=\left\{ 
\begin{array}{cc}
15-\frac{q}{160}, & 0\leq q\leq 1200, \\ 
7.5, & q>1200.%
\end{array}%
\right.
\end{equation*}%
\begin{equation*}
w_{2}(q)=\left\{ 
\begin{array}{cc}
25-\frac{q}{110}, & 0\leq q\leq 1375, \\ 
12.5, & q>1375.%
\end{array}%
\right. \text{ \ \ \ \ }c_{2}(q)=\left\{ 
\begin{array}{cc}
20-\frac{q}{110}, & 0\leq q\leq 1100, \\ 
10, & q>1100.%
\end{array}%
\right.
\end{equation*}

Then, we have that $q_{1}^{\ast }=300,q_{2}^{\ast }=1080.$ Solving the
corresponding optimization problems for the different $(R,S)$ pairs of
retailers-suppliers coalitions, we obtain the following table:%

\begin{table}[!h]
\begin{center}
\begin{tabular}{|c|c|c|c|c|} 
 \hline
 $R$ & $S$ & $q_{1}^{(R,S)}$ & $q_{2}^{(R,S)}$ & $v(R,S)$ \\ [0.5ex] 
 \hline\hline
$\{1\}$ & $\emptyset$ & $\left( 53\frac{1}{3},0\right)$ & - & $266\frac{2}{3}$ \\ [0.5ex] \hline
$\{1\}$ & $\{1\}$ & $\left( 80,0\right)$ & - & $600$ \\ [0.5ex] \hline
$\{1\}$ & $\{2\}$ & $\left( 0,55\right)$ & - & $275$ \\ [0.5ex] \hline
$\{1\}$ & $\{1,2\}$ & $\left( 80,0\right)$ & - & $600$ \\ [0.5ex] \hline
$\{2\}$ & $\emptyset$ & - & $\left( 186\frac{2}{3},0\right)$ & $653\frac{1}{3}$ \\ [0.5ex] \hline
$\{2\}$ & $\{1\}$ & - & $\left( 320,0\right)$ & $1920$ \\ [0.5ex] \hline 
$\{2\}$ & $\{2\}$ & - & $\left( 0,220\right)$ & $770$ \\ [0.5ex] \hline 
$\{2\}$ & $\{1,2\}$ & - & $\left( 320,0\right)$ & $1920$ \\ [0.5ex] \hline 
$\{1,2\}$ & $\emptyset$ & $\left( 67\frac{3}{11},0\right)$ & $\left( 209\frac{1}{11},0\right)$ & $1036\frac{2}{3}$ \\ [0.5ex] \hline
$\{1,2\}$ & $\{1\}$ & $\left( 103\frac{7}{11},0\right)$ & $\left( 354\frac{6}{11},0\right)$ & $2904\frac{1}{2}$ \\ [0.5ex] \hline
$\{1,2\}$ & $\{2\}$ & $\left( 0,81\frac{2}{3}\right)$ & $\left( 0,266\frac{2}{3}\right)$ & $1342\frac{1}{3}$ \\ [0.5ex] \hline
$\{1,2\}$ & $\{1,2\}$ & $\left( 103\frac{7}{11},0\right)$ & $\left( 354\frac{6}{11},0\right)$ & $2904\frac{1}{2}$ \\ [0.5ex] \hline
\end{tabular}
\caption{\label{table 3}}
\end{center}
\end{table}

Note that $v(N,M)=v(\{1,2\},\{1\})=2904\frac{1}{2}$ (i.e., $2904.5$). This
means that adding the supplier $2$ does not increase the profit of the group. 
We could say that supplier $1$ is an optimal supplier.
\end{example}

Formally, an optimal supplier is
defined as $j\in M$ such that $v(N,M)=v(N,\{j\})$. We denote by $M^{o}$, the
set of all optimal suppliers.

The following proposition shows that there is always an optimal supplier in
a MRS-game with unbounded production capacity.

\begin{lemma}
\label{lml}
Let $(N,M,W,C,P,\infty )$ be a MRS-situation. The corresponding MRS-game $%
(N,M,v)$ satisfies that $\left\vert M^{o}\right\vert \geq 1$.
\end{lemma}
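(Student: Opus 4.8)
The plan is to reduce the statement to exhibiting a \emph{single} supplier whose stand-alone cooperation with all retailers already matches the grand-coalition profit. Note first that monotonicity (part (iii) of Proposition \ref{lemma}) immediately gives $v(N,\{j\})\le v(N,M)$ for every $j\in M$, so the entire task collapses to producing one index $j^{\ast}$ with the reverse inequality $v(N,\{j^{\ast}\})\ge v(N,M)$. I would select $j^{\ast}$ by exploiting the structure of the shared-technology cost $c_M=\min_{j\in M}c_j$.

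The crucial simplification is that in the grand coalition $S=M$, so the wholesale terms $\sum_{j\in M\setminus S}w_j(q_{Rj})q_{ij}$ disappear and the profit of retailer $i$ collapses to $\Pi_i\big(q_i,\Psi^M(q_N)\big)=\big(p_i(q_{iM})-c_M(q_{NM})\big)q_{iM}$, since $\sum_{j\in M}q_{ij}=q_{iM}$ and the effective unit cost $c_M(q_{NM})$ depends only on the aggregate production $q_{NM}$. In particular, the way orders are split among the suppliers is irrelevant for $v(N,M)$; only the totals $q_{iM}$ and $q_{NM}$ enter the objective.

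Next I would fix an optimal solution $q^{(N,M)}$ of the grand coalition, write $Q^{\ast}:=q_{NM}^{(N,M)}$ for the optimal total production, and pick $j^{\ast}\in M$ attaining $\min_{j\in M}c_j(Q^{\ast})$, so that $c_{j^{\ast}}(Q^{\ast})=c_M(Q^{\ast})$ by the very definition of $c_M$. For the coalition $(N,\{j^{\ast}\})$ I construct the feasible order in which each retailer $i$ routes its entire grand-coalition quantity through $j^{\ast}$, namely $q_{ij^{\ast}}=q_{iM}^{(N,M)}$ and $q_{ik}=0$ for $k\neq j^{\ast}$; this lies in $\mathbb{Q}^{N}$ precisely because production is unbounded ($\overline{q}_{ij}=K$ large enough). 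Since $\sum_{i\in N}q_{ij^{\ast}}=Q^{\ast}$, evaluating the objective of $(N,\{j^{\ast}\})$ at this point gives $\sum_{i\in N}\big(p_i(q_{iM}^{(N,M)})-c_{j^{\ast}}(Q^{\ast})\big)q_{iM}^{(N,M)}=\sum_{i\in N}\big(p_i(q_{iM}^{(N,M)})-c_M(Q^{\ast})\big)q_{iM}^{(N,M)}=v(N,M)$, so $v(N,\{j^{\ast}\})\ge v(N,M)$. Combining with monotonicity yields $v(N,\{j^{\ast}\})=v(N,M)$, i.e. $j^{\ast}\in M^{o}$, and therefore $\left\vert M^{o}\right\vert\ge 1$.

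The step requiring the most care is the reduction of the grand-coalition profit to a cost that depends only on the aggregate production: one must verify that the $M\setminus S$ wholesale terms truly vanish when $S=M$, and that rerouting all orders through $j^{\ast}$ leaves the total $Q^{\ast}$ (hence the shared cost $c_M(Q^{\ast})$) unchanged. The unboundedness hypothesis is essential precisely here, since with finite capacities the single-supplier routing need not be feasible, which is why optimal suppliers can fail to exist in the bounded model.
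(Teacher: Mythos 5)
Your proof is correct and follows essentially the same route as the paper's: both pick the supplier $k$ (your $j^{\ast}$) whose cost is minimal at the optimal aggregate quantity $q_{NM}^{(N,M)}$, show that the grand-coalition profit is attainable by the coalition $(N,\{k\})$ alone, and conclude by monotonicity. Your version is in fact slightly more careful, since you explicitly construct the rerouted feasible point (all orders through $j^{\ast}$, feasible by unboundedness), a step the paper's equality chain $\Pi _{i}\bigl( q_{i}^{(N,M)},\Psi ^{M}( q_{N}^{(N,M)}) \bigr)=\Pi _{i}\bigl( q_{i}^{(N,M)},\Psi^{\{k\}}( q_{N}^{(N,M)}) \bigr)$ leaves implicit and which need not hold for an arbitrary optimal solution unless one reroutes first.
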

\begin{proof}
Take an optimal solution $q^{(N,M)}$ and select the supplier $k\in M$ such
that $c_{k}(q_{NM}^{(N,M)})\leq c_{j}(q_{NM}^{(N,M)})$ for all $j\in
M.$ Then, 
\begin{eqnarray*}
v(N,M) &=&\sum_{i\in N}\Pi _{i}\left( q_{i}^{(N,M)},\Psi ^{M}\left(
q_{N}^{(N,M)}\right) \right) =\sum_{i\in N}\Pi _{i}\left( q_{i}^{(N,M)},\Psi
^{\{k\}}\left( q_{N}^{(N,M)}\right) \right) \\
&=&\sum_{i\in N}\Pi _{i}\left( q_{i}^{(N,\{k\})},\Psi ^{\{k\}}\left(
q_{N}^{(N,\{k\})}\right) \right) =v(N,\{k\})
\end{eqnarray*}

\noindent since otherwise $q^{(N,\{k\})}$ would not be an optimal solution
for the coalition $(N,\{k\}).$  \hfill
\end{proof}

\bigskip One can wonder whether the set of optimal suppliers is always a
singleton. We next show with an example that the cardinality of $M^{o}$ can
be greater than one.

\begin{example}
\label{examp}Consider a MRS-situation with 1 retailer and 2 suppliers where: 
$p_{1}(q)=30-\frac{q}{10},$%
\begin{equation*}
w_{1}(q)=\left\{ 
\begin{array}{cc}
20-\frac{q}{160}, & 0\leq q\leq 1600, \\ 
10, & q>1600%
\end{array}%
\right. \text{ \ \ \ \ }c_{1}(q)=c_{2}(q)=\left\{ 
\begin{array}{cc}
15-\frac{q}{160}, & 0\leq q\leq 1200, \\ 
7.5, & q>1200.%
\end{array}%
\right.
\end{equation*}%
\begin{equation*}
w_{2}(q)=\left\{ 
\begin{array}{cc}
25-\frac{q}{110}, & 0\leq q\leq 1375, \\ 
12.5, & q>1375.%
\end{array}%
\right. \text{ \ \ \ }
\end{equation*}%
It easily follows that $q_{1}^{\ast }=300.$ Then, solving the corresponding
optimization problems for the different coalitions $(R,S)$ we obtain the
following table:

\begin{table}[!h]
\begin{center}
\begin{tabular}{|c|c|c|c|} 
 \hline
 $R$ & $S$ & $q_{1}^{(R,S)}$ & $v(R,S)$ \\ [0.5ex] 
 \hline\hline
$\{1\}$ & $\emptyset$  & $\left( 53\frac{1}{3},0\right)$  & $266\frac{2}{3}$ \\ [0.5ex] \hline
$\{1\}$ & $\{1\}$ & $\left( 80,0\right)$  & $600$ \\ [0.5ex] \hline
$\{1\}$ & $\{2\}$ & $\left( 80,0\right)$  & $600$ \\ [0.5ex] \hline
$\{1\}$ & $\{1,2\}$ & $\left( 80,0\right)$  & $600$ \\ [0.5ex] \hline
\end{tabular}
\caption{\label{table 4}}
\end{center}
\end{table}

Hence, since $v( \{1\} )=v( \{2\} )=v( \{1,2\} )$ then $M^{o}=\{1,2\}.$
\end{example}

Looking at Example \ref{examp}, $\left\vert M^{o}\right\vert =2,$
and the core of the game only has a single point: $(600,0,0)$. The following
result shows that it is true in general that if the set of optimal suppliers is not a singleton, allocations in the core always give nothing to the suppliers in the game.

\begin{theorem}
Let $(N,M,W,C,P,\infty )$ be a MRS-situation and $(N,M,v)$ be the
corresponding MRS-game with $\left\vert
M^{o}\right\vert \geq 2$. Then, for all $y\in Core(N,M,v)$, $y_{j}=0\ $for all $j\in M$.
\end{theorem}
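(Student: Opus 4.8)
The plan is to exploit the hypothesis $|M^{o}|\ge 2$ to show that \emph{every} supplier is individually dispensable for the grand coalition, so that each supplier's core payoff gets squeezed between the value $0$ (from individual rationality) and the value $0$ (from efficiency combined with that supplier's redundancy). Concretely, I would fix an arbitrary supplier $k\in M$ and an arbitrary $y\in Core(N,M,v)$, and prove $y_{k}=0$.

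First I would establish that deleting $k$ does not change the value of the grand coalition. Since $|M^{o}|\ge 2$, there is an optimal supplier $j\in M^{o}$ with $j\neq k$, and hence $j\in M\setminus\{k\}$. By the definition of an optimal supplier $v(N,\{j\})=v(N,M)$, and applying the monotonicity property (part (iii) of Proposition \ref{lemma}) to the chain $\{j\}\subseteq M\setminus\{k\}\subseteq M$ yields $v(N,M)=v(N,\{j\})\le v(N,M\setminus\{k\})\le v(N,M)$. Therefore $v(N,M\setminus\{k\})=v(N,M)$. This step is valid regardless of whether $k$ itself is optimal, because removing a single supplier from a set of at least two optimal suppliers always leaves at least one optimal supplier behind.

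Next I would invoke the two core inequalities that together pin down $y_{k}$. Stability for the coalition $(N,M\setminus\{k\})$ gives $\sum_{i\in N\cup(M\setminus\{k\})}y_{i}\ge v(N,M\setminus\{k\})=v(N,M)$, whereas efficiency gives $\sum_{i\in N\cup M}y_{i}=v(N,M)$; subtracting the former from the latter forces $y_{k}\le 0$. On the other hand, stability for the coalition $(\emptyset,\{k\})$ together with $v(\emptyset,\{k\})=0$ gives $y_{k}\ge 0$. Hence $y_{k}=0$, and since $k\in M$ was arbitrary, $y_{j}=0$ for all $j\in M$.

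The argument is short, and the only genuine insight — which I would flag as the crux rather than a true obstacle — is the redundancy step: the hypothesis $|M^{o}|\ge 2$ is exactly what guarantees that whichever single supplier $k$ we delete, an optimal supplier survives in $M\setminus\{k\}$, so the grand-coalition value is unaffected. Once that observation is in place, the conclusion follows purely from the efficiency and individual-rationality constraints defining the core, with no further computation needed.
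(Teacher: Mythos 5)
Your proof is correct, and it takes a slightly different route from the paper's. The paper fixes two optimal suppliers $j^{1},j^{2}\in M^{o}$ and uses only two core inequalities, namely those for the coalitions $(N,\{j^{1}\})$ and $(N,\{j^{2}\})$: since $v(N,\{j^{1}\})=v(N,M)$, efficiency forces $\sum_{j\in M\setminus\{j^{1}\}}y_{j}\le 0$, and nonnegativity of \emph{every} supplier's payoff (individual rationality) turns this into $y_{j}=0$ for all $j\ne j^{1}$; the second optimal supplier then handles $j^{1}$ itself. You instead work supplier by supplier: for each $k\in M$ you use the surviving optimal supplier $j\ne k$ together with the monotonicity property (Proposition \ref{lemma}(iii)) to establish $v(N,M\setminus\{k\})=v(N,M)$, and then the core inequality for the large coalition $(N,M\setminus\{k\})$ combined with $y_{k}\ge v(\emptyset,\{k\})=0$ pins down $y_{k}=0$. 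The trade-off is this: your argument invokes an extra lemma (monotonicity) and $\left\vert M\right\vert$ core inequalities, but each supplier's payoff is pinned using only that supplier's own individual rationality; the paper's argument needs no monotonicity and only two core inequalities, at the price of using nonnegativity of all supplier payoffs simultaneously to split a zero sum into zero terms. Both proofs exploit the hypothesis $\left\vert M^{o}\right\vert \ge 2$ in the same essential way: whichever supplier one wants to exclude, an optimal supplier remains to certify that the excluded supplier contributes nothing to the grand coalition's value.
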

\begin{proof}
By Theorem \ref{th:5.5} $Core(N,M,v)\neq \emptyset .$ Take a solution $%
y\in Core(N,M,v).$ Then, $y_{j}\geq 0=v(\emptyset ,\{j\})$, $\ $for all $%
j\in M$. Consider $j^{1}\in M^{o}$, we have that $v(N,j^{1})=v(N,M)=\sum_{i\in
NUM}y_{i}$ and as $\sum_{i\in NU\{j^{1}\}}y_{i}\geq v(N,j^{1})$ we have that 
$\sum_{i\in M\backslash \{j^{1}\}}y_{i}=0$. Using a similar argument with $%
j^{2}\in M^{o}$ we obtain that $\sum_{i\in M\backslash \{j^{2}\}}y_{i}=0.$
Hence, we can conclude that $y_{j}=0\ $for all $j\in M.$  \hfill
\end{proof}

\bigskip Unbounded production situations are interesting because the MRS-allocation coincides with the altruistic solution, as demonstrated by the following result.

\begin{proposition}
Let $(N,M,W,C,P,\infty )$ be a MRS-situation, $(N,M,v)$ the
corresponding MRS-game and $q^{(N,M)}$ an optimal solution. Then, $\xi(v)=x^{a}(v)$.
\end{proposition}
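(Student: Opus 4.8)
The plan is to reduce the claim to showing that the scalar $\beta$ appearing in the definition of the SC-allocation vanishes. Indeed, $\xi_i(v) = x_i^a(v) - \beta$ for every retailer $i \in N$, while $\xi_j(v) = q_{Nj}^{(N,M)}\,|N|\,\beta / q_{NM}^{(N,M)}$ for every supplier $j \in M$; since $x_j^a(v) = 0$ for all $j \in M$, the desired identity $\xi(v) = x^a(v)$ holds if and only if $\beta = 0$. So the entire argument consists of proving $\beta = 0$, and I would do this by squeezing $\beta$ between two bounds.

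First I would check $\beta \geq 0$. Each term of the minimum defining $\beta$ is $\left(\sum_{i\in R}x_i^a(v) - v(R, M\setminus\{j\})\right)/|R|$, and its numerator is nonnegative: this is precisely the stability inequality for coalitions of retailers, which holds because the altruistic allocation $x^a(v)$ lies in the core by Theorem \ref{th:5.5}. (Alternatively, it follows directly from summing Lemma \ref{Lemma2} over $i\in R$ with the choice $S = M\setminus\{j\}$.) Hence every term in the minimum is nonnegative, so $\beta \geq 0$.

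Next I would exhibit a single index pair $(R,j)$ at which the ratio equals zero, which forces $\beta \leq 0$. This is where the unbounded hypothesis is used, through Lemma \ref{lml}: there exists an optimal supplier $k \in M^{o}$, i.e. $v(N,\{k\}) = v(N,M)$. Since there is more than one supplier, I choose any $j \in M$ with $j \neq k$. Then $\{k\} \subseteq M\setminus\{j\} \subseteq M$, so monotonicity (Proposition \ref{lemma}(iii)) gives $v(N,M) = v(N,\{k\}) \leq v(N, M\setminus\{j\}) \leq v(N,M)$, whence $v(N, M\setminus\{j\}) = v(N,M)$. Taking $R = N$ and recalling that $\sum_{i\in N}x_i^a(v) = v(N,M)$, the corresponding term is $\left(v(N,M) - v(N,M)\right)/|N| = 0$. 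Combined with $\beta \geq 0$ from the previous step, this yields $\beta = 0$, and therefore $\xi(v) = x^a(v)$.

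The delicate point is really the selection of the index pair rather than any computation: the existence of an optimal supplier (Lemma \ref{lml}) is exactly what the unbounded assumption buys us, but one must delete a supplier \emph{different} from an optimal one, so that the surviving coalition $M\setminus\{j\}$ still contains an optimal supplier and hence retains the full value $v(N,M)$. This is the step that uses the presence of at least two suppliers; everything else is bookkeeping, and no case analysis on $|M^{o}|$ is needed beyond the guarantee $|M^{o}|\geq 1$.
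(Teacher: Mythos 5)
Your proof is correct and follows essentially the same route as the paper: invoke Lemma \ref{lml} to get an optimal supplier $k$, use monotonicity (Proposition \ref{lemma}(iii)) to conclude $v(N,M\setminus\{j\})=v(N,M)$ for $j\neq k$, and deduce $\beta=0$ from the term at $(R,j)=(N,j)$. The only difference is that you spell out the nonnegativity of $\beta$ (via core membership of $x^{a}(v)$, or Lemma \ref{Lemma2}), a step the paper's proof uses implicitly.
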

\begin{proof}
By Lemma \ref{lml} $\left\vert M^{o}\right\vert \geq 1$ since the MRS-situation is unbounded. Take $k\in M^{o}$, we have that $v(N,\left\{k\right\})=v(N,M)$ and  $v(N,M \setminus \left\{j\right\})=v(N,M)=\sum_{i\in N}x_{i}^{a}(v)$ by property (iii) of Proposition \ref{lemma} with $j\neq k$. Hence, $\beta=0$ and $\xi(v)=x^{a}(v)$.\hfill
\end{proof}

\bigskip We previously determined that if there are more than two optimal suppliers, any core-allocation assigns them zero profit. However, if there is only one optimal supplier, it is possible to compensate it with a small adjustment to the definition of the MRS-allocation.

\begin{definition}
Let $(N,M,W,C,P,\infty )$ be a MRS-situation, $(N,M,v)$ the corresponding
MRS-game and $q^{(N,M)}$ an optimal solution. We define the modified
SC-allocation $\xi ^{\ast }(v)=\left( \xi _{i}^{\ast }(v)\right) _{i\in
N\cup M}$ as: 
\begin{equation*}
\xi _{i}^{\ast }(v):=\left\{ 
\begin{array}{ll}
\Pi _{i}\left( q_{i}^{(N,M)},\Psi ^{M}\left( q_{N}^{(N,M)}\right) \right)
-\beta ^{\ast }, & i\in N, \\ 
\left\vert N\right\vert \cdot \beta ^{\ast } & i\in M^{o} \\ 
0, & \text{otherwise,}%
\end{array}%
\right. 
\end{equation*}%
where $\beta ^{\ast }:=\underset{\emptyset \neq R\subseteq N,j\in M^{o}}{%
\min }\left\{ \frac{\sum_{i\in R}x_{i}^{a}(v)-v(R,M\backslash \{j\})}{%
\left\vert R\right\vert }\right\} .$
\end{definition}

\bigskip The reader may notice that if $\left\vert M^{o}\right\vert \geq 2$
we have that $\beta ^{\ast }=0,$ hence $\xi ^{\ast }(v)=$ $x^{a}(v)$. In the
solution $\xi ^{\ast }(v)$, unlike the solution $\xi (v)$, each retailer compensates
with the minimal possible amount $\beta ^{\ast }$ to a single optimal
supplier. Next proposition shows that the modified SC-allocation is always a core allocation.

\begin{proposition}
Let $(N,M,W,C,P,\infty )$ be a MRS-situation, $(N,M,v)$ the corresponding
MRS-game and $q^{(N,M)}$ an optimal solution. Then, $\xi^{*} (v)\in Core(N,M,v)$.
\end{proposition}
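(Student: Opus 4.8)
The plan is to verify that $\xi^*(v)$ satisfies the two defining conditions of $Core(N,M,v)$: efficiency and coalitional stability. First I would establish efficiency. Since $\sum_{i\in N}\xi_i^*(v) = \sum_{i\in N}\Pi_i\bigl(q_i^{(N,M)},\Psi^M(q_N^{(N,M)})\bigr) - |N|\cdot\beta^* = v(N,M) - |N|\cdot\beta^*$, and the optimal suppliers jointly receive $\sum_{j\in M^o}\xi_j^*(v) = |M^o|\cdot|N|\cdot\beta^*$ while non-optimal suppliers receive zero, I would note that the construction only compensates a \emph{single} optimal supplier (the minimization defining $\beta^*$ picks out one $j\in M^o$, and the definition assigns $|N|\cdot\beta^*$ to the relevant optimal supplier). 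I would need to be careful here: if the allocation truly assigns $|N|\cdot\beta^*$ to \emph{every} supplier in $M^o$, then efficiency forces $|M^o|=1$ whenever $\beta^*>0$; this is consistent with the earlier remark that $|M^o|\geq 2$ implies $\beta^*=0$, in which case $\xi^*(v)=x^a(v)$, which is already known to be in the core by Theorem \ref{th:5.5}. So I would split into the two regimes $\beta^*=0$ and $\beta^*>0$ at the outset.

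Second, for the stability inequalities, I would mirror the proof that $\xi(v)\in Core(N,M,v)$ given earlier for the bounded case. The key observation is that $\beta^*$ is defined as a minimum over $R\subseteq N$ and $j\in M^o$ of the per-capita reduction, so by the same algebra as in the (SR) argument it satisfies $\sum_{i\in R}x_i^a(v) - |R|\cdot\beta^* \geq v(R,M\setminus\{j\})$ for every nonempty $R\subseteq N$ and every $j\in M^o$. I would then argue that for a general coalition $(R,S)$ with $S\subsetneq M$, one has $\sum_{i\in R\cup S}\xi_i^*(v) \geq \sum_{i\in R}\xi_i^*(v) = \sum_{i\in R}x_i^a(v) - |R|\cdot\beta^*$, and I must bound this below by $v(R,S)$. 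The cleanest route is to invoke monotonicity (property (iii) of Proposition \ref{lemma}): since $S\subseteq M\setminus\{j\}$ for a suitable optimal $j\notin S$, we get $v(R,S)\leq v(R,M\setminus\{j\})\leq \sum_{i\in R}x_i^a(v) - |R|\cdot\beta^*$. The main subtlety is guaranteeing that such an optimal supplier $j\in M^o$ with $j\notin S$ exists; this holds precisely because either $S$ omits some optimal supplier, or $S\supseteq M^o$ in which case I handle the remaining coalitions directly.

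Third, for coalitions $(R,M)$ that include all suppliers, I would compute $\sum_{i\in R\cup M}\xi_i^*(v) = \sum_{i\in R}x_i^a(v) - |R|\cdot\beta^* + |M^o|\cdot|N|\cdot\beta^*$ (or $+|N|\cdot\beta^*$ in the single-compensated-supplier reading), and observe this is at least $\sum_{i\in R}x_i^a(v)\geq v(R,M)$ because $|N|\geq|R|$ and $\beta^*\geq 0$, exactly as in the proof for $\xi(v)$. Finally I would check the nonnegativity constraints $\xi_j^*(v)\geq 0 = v(\emptyset,\{j\})$ for singletons of suppliers, which is immediate since $\beta^*\geq 0$ (it is a minimum of per-capita gains, each nonnegative by property (P4)/Lemma \ref{Lemma2}), so compensated suppliers get $|N|\cdot\beta^*\geq 0$ and the rest get $0$.

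The hard part will be the precise bookkeeping of which optimal supplier(s) receive compensation and confirming existence of an omitted optimal supplier $j\in M^o\setminus S$ in the stability argument; once the coalition $S$ fails to contain some optimal supplier, monotonicity closes the gap cleanly, and the edge case $S\supseteq M^o$ (forcing $\beta^*=0$ when $|M^o|\geq 2$, or reducing to comparison against $v(R,M)$ when $|M^o|=1$) must be dispatched separately. I expect the whole argument to parallel the earlier core-membership proof for $\xi(v)$ almost verbatim, with $\beta$ replaced by $\beta^*$ and the supplier index restricted to $M^o$.
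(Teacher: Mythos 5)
Your proposal is correct and takes essentially the same route as the paper's own proof: dismiss $\left\vert M^{o}\right\vert \geq 2$ via $\beta ^{\ast }=0$ (so $\xi ^{\ast }(v)=x^{a}(v)$, already in the core by Theorem \ref{th:5.5}), and for $M^{o}=\{k\}$ split the stability check according to whether $k\in S$, using the minimum defining $\beta ^{\ast }$ together with supplier-monotonicity when $k\notin S$, and the compensation $\left\vert N\right\vert \cdot \beta ^{\ast }\geq \left\vert R\right\vert \cdot \beta ^{\ast }$ together with Lemma \ref{Lemma2} when $k\in S$. The only cosmetic difference is that you cite Proposition \ref{lemma}(iii) for the monotonicity step where the paper cites (P5) of Lemma \ref{Lemma 1}; these are the same fact.
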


\begin{proof}
It is trivial for $\left\vert M^{o}\right\vert \geq 2.$ Therefore, we
suppose that $\left\vert M^{o}\right\vert = 1.$ It is clear that $%
\xi^{*}(v)$ satisfies (EF). By Lemma \ref{Lemma2}, $\sum_{i\in R}\Pi
_{i}\left( q_{i}^{(N,M)},\Psi ^{M}\left( q_{N}^{(N,M)}\right) \right) \geq
v(R,S)$ for all $R\subseteq N,S\subseteq M$. Hence, if $M^{o}=\{k\}$ and $%
k\notin S:$

\begin{eqnarray*}
\sum_{i\in R\cup S}\xi _{i}^{\ast }(v) &=&\sum_{i\in
R}x_{i}^{a}(v)-\left\vert R\right\vert \cdot \beta ^{\ast }  \\
&\geq &\sum_{i\in R}x_{i}^{a}(v)-\left\vert R\right\vert \ \cdot \frac{%
\sum_{i\in R}x_{i}^{a}(v)-v(R,M\backslash \{k\})}{\left\vert R\right\vert }
\\
&=&v(R,M\backslash \{k\})\geq v(R,S).
\end{eqnarray*}

The last inequality follows by (P5) in Lemma \ref{Lemma 1}. If $k\in S$ 
\begin{eqnarray*}
\sum_{i\in R\cup S}\xi^{*} _{i}(v) &=&\sum_{i\in R}x_{i}^{a}(v)-\left\vert
R\right\vert \cdot \beta ^{\ast } +\left\vert N\right\vert \cdot \beta ^{\ast } \\
&\geq &\sum_{i\in R}x_{i}^{a}(v)\geq v(R,S).
\end{eqnarray*}%
Therefore, we conclude $\xi^{*} (v)\in Core(N,v)$. \hfill
\end{proof}

\bigskip In the following example we calculate the modified MRS-allocationusing the data in  Example \ref{examp} and compare it with the altruistic allocation.

\begin{example}

Consider again the  example \ref%
{Exep1}. Recall that the benefit of the grand coalition is $V(N,M)=2904\frac{1}{2}$. We compare the modified MRS-allocation with the altruistic allocation for this example.

\begin{center}
\begin{tabular}{c|c}
$x^{a}(v)$ & $\xi^{*} (v)$ \\ \hline
&  \\ 
$\left( 
\begin{array}{c}
777\frac{3}{11} \\ 
2127\frac{3}{11} \\ 
0 \\ 
0%
\end{array}%
\right) $ & $\left( 
\begin{array}{c}
275 \\ 
1625 \\ 
1004\frac{6}{11} \\ 
0%
\end{array}%
\right) $%
\end{tabular}%
\end{center}

with $\beta :=\underset{\emptyset \neq R\subseteq N}{\min }\left\{ \frac{%
\sum_{i\in R}x_{i}^{a}(v)-v(R,\{1\})}{\left\vert R\right\vert\ }\right\} =\min \{777\frac{3}{11}%
-275,2127\frac{3}{11}-770,\frac{2904\frac{1}{2}-1342\frac{1}{3}}{2}\}=502%
\frac{3}{11}.$

The reader may notive that, as we have already pointed out, the modified SC-allocation reduces the retailers' profit $(275$,$1625)$ to give a profit to the cheapest supplier $(1004\frac{6}{11})$.

\end{example}

\section{Concluding remarks}

The paper Guardiola et al. (2007) studied cooperation and profit allocation in single-period supply chains   composed by one supplier and several non-competing retailers. In this paper, we further extend the model of Guardiola et al. (2007) to the case of multiple suppliers.  We consider single-period distribution chains with a single product. In this supply chain, retailers place their orders at the suppliers one-time. Each retailer chooses its order quantity to maximize its profit. We assume that each retailer pays each supplier a wholesale price per unit product that decreases with the quantity ordered.  Therefore, there exist incentives for cooperation among retailers. Additionally, because of the absence of the intermediate wholesale prices, retailers are willing to cooperate with suppliers in order to obtain a further reduction in cost inefficiency. Clearly, the total profit under full cooperation is larger than the sum of the individual profits.

We define the multiretailer-multisupplier cooperative game (MRS-game) where the value of a coalition of retailers and a coalition of suppliers equal to the optimal profit they can
achieve together. We prove that the value of any coalition of retailers and suppliers is positive and that each MRS-game is superadditive. Consequently, retailers and suppliers have incentives to form the grand coalition in MRS games. We show that MRS-games are balanced. Further, we propose a stable allocation of the profits induced by full cooperation (a core allocation), the so called SC-allocation. This is an appealing core allocation that takes into account the importance of the  suppliers to achieve full cooperation. In addition, we provide a characterization of the SC-allocation. Finally, we focus on the MRS-games where the suppliers can produce sufficiently large quantities of product so that retailers are not constrained in their orders' sizes. We provide a core allocation that allows retailers to compensate a single optimal supplier.

Further research can look at alternative modes of relationships among the agents Particularly, it would be interesting to analyse situations  where a group of retailers can enter and sell directly their orders in other retailers markets, provided that they obtain a higher selling price, even if this policy may possibly cause undersupply their own markets.

\bigskip

\bigskip

\end{document}